\newtheorem{theorem}{Theorem}[section]
\newtheorem{factthm}[theorem]{Fact}
\newtheorem{lemma}[theorem]{Lemma}
\newtheorem{definition}[theorem]{Definition}
\newtheorem{remark}[theorem]{Remark}
\newtheorem*{remark*}{Remark}
\newcommand{\captionsetup}[1]{  }
\newcommand{\calA}{\mathcal{A}}
\newcommand{\calB}{\mathcal{B}}
\newcommand{\sfB}{\mathsf{B}}
\newcommand{\sfE}{\mathsf{E}}
\newcommand{\As}{\mathcal{A}}
\newcommand{\Bs}{\mathcal{B}}
\newcommand{\Es}{\mathcal{E}}
\newcommand{\Ts}{\mathcal{T}}
\newcommand{\Ss}{\mathcal{S}}
\newcommand{\enc}{{\sf Enc}}
\newcommand{\dec}{{\sf Dec}}
\newcommand{\gen}{{\sf Gen}}
\newcommand{\skgen}{{\sf SKGen}}
\newcommand{\pkgen}{{\sf PKGen}}
\newcommand{\A}{\mathsf{A}}
\newcommand{\B}{\mathsf{B}}
\newcommand{\C}{\mathsf{C}}
\newcommand{\E}{\mathsf{E}}
\newcommand{\D}{\mathsf{D}}
\newcommand{\ct}{{\sf ct}}
\newcommand{\sk}{{\sf sk}}
\newcommand{\pk}{{\sf pk}}
\renewcommand{\kappa}{\ell}
\newcommand{\poly}{{\sf poly}}
\newcommand{\negl}{{\sf negl}}
\DeclareMathOperator{\Tr}{Tr}
\newcommand{\supp}{\mathsf{SUPP}}
\newcommand{\view}{{\sf View}}
\newcommand{\sdotfill}{\textcolor[rgb]{0.8,0.8,0.8}{\dotfill}} 
\renewcommand{\supp}{{\sf supp}}
\title{Cryptomania v.s. Minicrypt in a Quantum World}
\date{}
\author{
Longcheng Li\thanks{University of Cambridge. Email: \texttt{lilongcheng116@gmail.com}}
\quad 
Qian Li\thanks{Shenzhen International  Center For Industrial  And  Applied  Mathematics, Shenzhen Research Institute of Big Data. Email: \texttt{liqian.ict@gmail.com}}
\quad
Xingjian Li\thanks{Tsinghua University. Email: \texttt{lxj22@mails.tsinghua.edu.cn}}
\quad
Qipeng Liu\thanks{University of California San Diego. Email: \texttt{qipengliu0@gmail.com}}
}
\begin{document}
\maketitle

\begin{abstract}
We prove that it is impossible to construct perfect-complete quantum public-key encryption (QPKE) with classical keys from quantumly secure one-way functions (OWFs) in a black-box manner, resolving a long-standing open question in quantum cryptography. 

Specifically, in the quantum random oracle model (QROM), no perfect-complete QPKE scheme with classical keys, and classical/quantum ciphertext can be secure. This improves the previous works which require either unproven conjectures or imposed restrictions on key generation algorithms.
This impossibility extends to QPKE with quantum public key in natural settings, which is tight to all known QPKE constructions with quantum public key.
\end{abstract}

\section{Introduction}
 
Quantum information and computation are topics with growing importance in cryptography. They reshape people's views on cryptography drastically, including breaking classical secure cryptosystems~\cite{shor1999polynomial}, creating primitives that are impossible for classical~\cite{wiesner1983conjugate}, and weakening assumptions~\cite{bennett2014quantum}.
However, quantum cryptography is not an all-powerful tool, as it also has its own limits. Therefore, characterizing the boundary of quantum cryptography under different assumptions has become a topic of great interest.

Boundaries between classical cryptographic primitives have already been studied extensively. 
In the seminal work by Impagliazzo and Rudich~\cite{impagliazzo1989limits}, they proposed the methodology of black-box separation. They showed that one-way functions are insufficient to build public key encryption (PKE) schemes in a black-box manner. In the famous work by Impagliazzo~\cite{impagliazzoPersonalViewAveragecase1995}, he characterized five possibilities on the hardness of NP problems and their complexity consequences. Among them, he used the word `Minicrypt' to refer to a world where one-way functions exist, and the word `Cryptomania' for a world with public key cryptography primitives. 
Thus, the separation result by Impagliazzo and Rudich can be viewed as a separation between Minicrypt and Cryptomania.


It turns out that the landscape of quantum cryptography varies depending on the definition of Minicrypt. For example, it is known that with quantum communication, many primitives in classical Cryptomania can be built from one-way functions, including key agreement~\cite{bennett2014quantum}, oblivious transfer~\cite{grilo2021oblivious,bartusek2021round}, public key encryption (with quantum public keys)~\cite{coladangelo2023quantum,MW24,KMNY24,barootiPublicKeyEncryptionQuantum2023}. 
On the other hand, none of these constructions are known in the quantum computation classical communication (QCCC) setting. 
Quantum communication has various drawbacks in practice, including the difficulties of authenticating, broadcasting, reusability, and potentially adding interactions. 
In this work, we focus on the fundamental question regarding Minicrypt and Cryptomania in a quantum world \cite{hosoyamada2024finding}: 
\begin{center}
    {\it Does there exist any separation between Minicrypt and Cryptomania in the QCCC setting?}
\end{center}

\paragraph{Previous works.}

Several works have attempted to address this question, but classical proof techniques often fail due to fundamental differences between quantum and classical algorithms/information, including challenges related to cloning, rewinding, and the unique structure of quantum queries. As a result, all previous approaches have either relied on unproven conjectures or applied only to highly restricted QPKE schemes.

In the work by Austrin et al.~\cite{austrinImpossibilityKeyAgreements2022}, they initialized the study of separations between quantum key agreements and one-way functions in the quantum random oracle model (QROM). They showed that under some \emph{conjecture named `polynomial compatibility conjecture'}, quantum key agreements with perfect correctness don't exist. Since quantum PKE implies a two-round key agreement scheme, their result also implies a separation between quantum PKE and one-way functions. The same idea was also applied in separating PKE with quantum ciphertext and one-way functions~\cite{bouaziz2023towards}, which we will discuss later.
To show a separation between key agreement and one-way function in the QROM, one needs to construct an eavesdropper that breaks the security of the key agreement by making polynomially many queries to the oracle. In the paper~\cite{austrinImpossibilityKeyAgreements2022}, the authors construct an eavesdropper who only makes classical queries to the oracle, while Alice and Bob can make quantum queries in general. Because honest parties get quantum queries and the eavesdropper only classical queries, the conjecture may be too strong.

\medskip

In another line of work, Li, Li, Li, and Liu~\cite{own} approach the problem from a different perspective. They introduce tools from quantum Markov chains~\cite{fawzi2015quantum} to construct an eavesdropper for quantum public key encryption schemes that have a \emph{classical key generation process}. On a high level, they construct an Eve who uses a polynomial number of quantum queries and creates a simulated view for Alice called $\A'$, denoting the view of Alice before she tries to decrypt a ciphertext using the random oracle. Using the quantum Markov chain, they can argue that the state over the real Alice and Bob  $\A\B$ is statistically close to that over the simulated Alice and Bob $\A'\B$. Finally, they finish running the simulated Alice with one caveat: the simulated Alice may not be consistent with the real random oracle, and they have to find an appropriate oracle to complete the execution. The work~\cite{own} solves this issue by only focusing on the key generation algorithm which makes only classical queries. In a followup paper~\cite{own:ITCS}, the authors introduce a view from boolean function analysis as an attempt to attack PKE with quantum-query key generation.
They made some conjecture about the zero point distributions of low-degree polynomials and proved that the conjecture implies a separation between perfect-complete PKE and OWF in the QCCC model. 

Both works fail to establish the separation once the key-generation algorithm is allowed to quantumly query a random oracle. More broadly, lifting separations proved in the classical-accessible-oracle model to the quantum-accessible-oracle model can be notoriously difficult. While several landmark separations are known with classically accessible oracles --- e.g., quantum money v.s. one-way functions~\cite{ananth2023plausibility} and QCMA v.s. QMA~\cite{li2023classical,liu2025qma} --- whether these separations persist when the oracle admits quantum queries still remains a major open problem in quantum complexity and quantum cryptography; either these separation results completely fail when quantum access is provided, or they have to rely on unproven conjectures.

\subsection{Our Main Result}

In this work, we refine the separation framework of~\cite{own,own:ITCS} and present a unified approach to proving lower bounds and separations for multiparty-computation related primitives in the QROM. We hope the improved framework can be used for achieving lower bounds for other MPC related protocols with prefect completeness. 

Using that, we obtain a full separation between perfectly complete quantum PKE and one-way functions, i.e. the quantum version of~\cite{impagliazzo1989limits} for public-key encryption, closing the gap in previous works.
\begin{theorem}[Informal]\label{thm:thm1}
    Perfect-complete quantum public key encryption, with classical keys and classical ciphertext, does not exist in the quantum random oracle model.
\end{theorem}

\begin{remark}
Our result removes the conjecture used in~\cite{austrinImpossibilityKeyAgreements2022,own:ITCS} and the restriction of a classical key generation algorithm in~\cite{own}, leaving perfect completeness as the only requirement. 
Perfect completeness is a natural requirement satisfied by many cryptographic schemes, both classical and quantum, including all known quantum PKE schemes even with quantum keys~\cite{coladangelo2023quantum,MW24,KMNY24,barootiPublicKeyEncryptionQuantum2023}. Therefore, focusing on the perfect-complete setting does not impose a strong restriction. We conjecture that allowing non-perfectness does not change the impossibility result; we hope that the separation framework proposed in this work is helpful for removing the last restriction on QPKE.
\end{remark}

\begin{table}[t]
  \centering
  \renewcommand{\arraystretch}{1.2} 
\setlength{\tabcolsep}{6pt} 
  \begin{tabular}{|l||c|c|c|c||}
    \hline
     & \cite{austrinImpossibilityKeyAgreements2022}  & \cite{own} & \cite{own:ITCS} & This Work (\Cref{thm:thm1}) \\
    \hline
    ${\sf Gen}$ & Q &  C & Q & Q\\
    \hline
    ${\sf Enc}$ & Q & Q & Q & Q \\
    \hline
    ${\sf Dec}$ & Q & Q & Q & Q \\
    \hline
    conjecture & \checkmark & & \checkmark & \\
    \hline
  \end{tabular}
  \captionsetup{skip=8pt}
  \caption{Comparing all impossibility results on QPKE with prefect completeness in the QROM. `Q' denotes quantum, `C' denotes classical. }
  \label{tab:all_results}
\end{table}

\paragraph{Result 2: Extending to quantum ciphertext.}

Using a quantum public key can cause challenges in public-key distribution, authentication, and reusability. \cite{bouaziz2023towards} raised the open question of whether quantum PKE from one-way functions is possible when using classical keys and a quantum ciphertext; since ciphertext does not require to be distributed and reused, it does not share any difficulties of quantum public keys. They proved that it is impossible, as long as the conjecture in~\cite{austrinImpossibilityKeyAgreements2022} was true. 
We extend \Cref{thm:thm1} to the quantum ciphertext case, improving the work~\cite{bouaziz2023towards}.
\begin{theorem}\label{thm:thm3}
Perfect-complete quantum public key encryption, with classical keys and quantum ciphertext, does not exist in the quantum random oracle model.
\end{theorem}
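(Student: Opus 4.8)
The plan is to reuse essentially all of the machinery built for \Cref{thm:thm1} and to isolate the one place where the ciphertext being quantum actually matters, namely the eavesdropper's final decryption step. As before, I would cast a perfectly-complete QPKE as a two-message key agreement: Alice publishes a classical public key $\pk$ produced by a quantum key-generation algorithm querying the random oracle, Bob encrypts a uniformly random key $k$ into a (now quantum) ciphertext $\ct$, and Alice decrypts. Eve's first task is to find a polynomial-size partial assignment $\mu$ that is consistent with the observed $\pk$ and that enjoys the win-win dichotomy (either $f(x^\mu)\neq 0$ for all $x$, or there are many disjoint $\mu'$ with $f(x^{\mu'\cdot\mu})\neq 0$ for all $x$). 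Because the key-generation polynomial $f$ and the string $\pk$ never see the ciphertext, this step, together with the bound on Eve's query count, carries over verbatim from \Cref{thm:thm1}; in particular it is exactly here that the conjecture of \cite{austrinImpossibilityKeyAgreements2022} used by \cite{bouaziz2023towards} was needed, and exactly here that the win-win removes it.

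The real work is to redo the final decryption step at the level of quantum states. In \Cref{thm:thm1} the classical $\ct$ sits in Eve's hands as a fixed string, and the query-weight/reprogramming argument only needs to guarantee that simulating Alice's decryption on the reprogrammed oracle $x^\mu$ (after sampling a consistent secret key) outputs the right $k$. When $\ct$ is quantum, two new issues appear. First, the ciphertext is a genuine quantum state that Eve cannot copy, so the analysis must track trace distance of density matrices rather than statistical distance of classical distributions, and Eve must apply her decryption directly to the single intercepted copy. Second, and more importantly, in the purified (compressed-oracle) view the quantum ciphertext is entangled with the oracle register, so reprogramming the oracle on the coordinates of $\mu$ (and of the unknown $\mu'$ in the second branch) can in principle disturb $\ct$ itself, something that cannot happen once a classical ciphertext has been measured.

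I would control this disturbance with the same kind of query-weight estimate that underlies \Cref{thm:thm1}, but now applied to Bob's encryption. Concretely, I would argue that Bob's encryption queries place only $O(\eps)$ total weight on the coordinates reprogrammed by $\mu$: those coordinates are dictated by Alice's key generation, a separate phase that Bob sees only through $\pk$, and Eve's query budget is chosen so that the relevant recovery/disturbance error is at most $\eps$. A quantum reprogramming (one-way-to-hiding type) bound then shows that the state Bob actually produces with the real oracle is $O(\sqrt{\eps})$-close in trace distance to the state he would produce with the reprogrammed oracle $x^\mu$; hence decrypting the intercepted $\ct$ under $x^\mu$ succeeds up to $O(\sqrt{\eps})$ error. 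The win-win is then closed exactly as in \Cref{thm:thm1}: in the first case $\mu$ alone reprograms the oracle so that a consistent secret key decrypts $\ct$ correctly, and in the second case the averaging argument selects a $\mu'$ on which decryption carries small query weight, so replacing the unknown $x^{\mu'\cdot\mu}$ by the known $x^\mu$ changes the quantum output of decryption only negligibly. Summing errors, Eve recovers $k$ with advantage $1-O(\sqrt{\eps})$.

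The step I expect to be the main obstacle is bounding the disturbance of the quantum ciphertext under reprogramming. Classically, conditioning on the transcript disentangles $\ct$ from everything else, which is precisely what let earlier work treat the ciphertext phase as benign; quantumly one must show that the entanglement between $\ct$ and the oracle created by Bob's encryption queries is small on exactly the coordinates that Eve reprograms. Making this simultaneously compatible with both branches of the win-win — in particular ensuring the bound holds against the unknown family of $\mu'$ in the second branch, and that the averaging argument can still locate a low-weight $\mu'$ when decryption acts on a quantum rather than a classical ciphertext — is where the quantum nature of the ciphertext forces genuinely new estimates rather than a verbatim translation of \Cref{thm:thm1}.
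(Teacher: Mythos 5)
Your proposal is correct and takes essentially the same route as the paper: there, \Cref{thm:thm3} is obtained as an immediate corollary of \Cref{thm:qpkeqpk}, whose proof reuses Steps 1--2 of the classical-ciphertext attack verbatim and treats the quantum ciphertext exactly as you outline --- the \Cref{lem:BBBV} query-weight bound applied to Bob's encryption gives trace-distance closeness between the ciphertext states produced under $H$ and under the reprogrammed oracle, perfect completeness then forces correct decryption (after a cq-decomposition and Markov step that conditions on $k_B$), and branch (b) of the win-win goes through by applying \Cref{lem:BBBV} to the decryption algorithm on the mixed ciphertext via purification, yielding the same $1-O(\sqrt{\epsilon})$ advantage. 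Two small corrections to your narrative: the reason Bob places little query weight on $\supp(\mu)$ is not that encryption is ``a separate phase seen only through $\pk$'' (keygen and encryption may query overlapping coordinates) but that Eve's Step-1 sampling captures all of Bob's heavy queries in $R_E$ and $\mu$ is chosen disjoint from $\supp(R_E)$; and the obstacles you anticipate at the end do not materialize --- since the oracle is a fixed function there is no ciphertext--oracle entanglement to control, and the averaging argument over the unknown $\mu'$ needs no new estimates once decryption's query weights are defined on the purified input.
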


\paragraph{Result 3: Extending to quantum public keys.}

As discussed earlier, we believe achieving reusability and non-interactivity are the core of public key encryption; thus we focused on classical keys. Still, we examine the possibility to achieve QPKE with quantum keys from one-way functions. Particularly, we show that:
\begin{theorem}\label{thm:thm2}
    Perfect-complete quantum public key encryption, with quantum public keys, classical secret keys and classical/quantum ciphertext, does not exist in the quantum random oracle model, if the public key is a deterministic function of the secret key, independent of the oracle.
\end{theorem}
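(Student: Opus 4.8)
The plan is to reduce \Cref{thm:thm2} to the win-win machinery already built for \Cref{thm:thm1}, using the hypothesis that the public key is a deterministic function of the secret key as the bridge between the classical and quantum parts of the scheme. Assume toward a contradiction a perfect-complete QPKE with quantum public key $\ket{\pk}$, classical secret key $\sk$, and $\ket{\pk}$ uniquely determined by $\sk$, say $\ket{\pk} = P^x(\sk)$ for a deterministic procedure $P$ and oracle truth table $x$. As before I would build a two-round key agreement: Alice runs $\qkeygen^x$ to obtain $(\sk, \ket{\pk})$ and publishes $\ket{\pk}$; Bob samples a random key $k$, computes $\ct \gets \enc^x(\ket{\pk}, k)$, sends $\ct$, and outputs $k$; Alice outputs $\dec^x(\sk, \ct)$. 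Since the public key is public, Eve may obtain a copy of $\ket{\pk}$, so her view is $(\ket{\pk}, \ct)$, and it suffices to construct a $\poly(1/\eps)$-query Eve that outputs $k$ with probability $1 - O(\eps)$.

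The first step is to recover the low-degree polynomial that drives the win-win. Although $\ket{\pk}$ is quantum, the secret key $\sk$ is the measured classical output of the quantum query algorithm $\qkeygen$, so by the polynomial method the generation probability $f_{\sk}(x) = \Pr[\qkeygen^x \to \sk]$ is a real polynomial in $x$ of degree at most twice the query count. I would then apply the same win-win lemma as in \Cref{thm:thm1}, now to $f_{\sk}$, producing a polynomial-size partial assignment $\mu$ for which either $f_{\sk}(x^\mu) \neq 0$ for all $x$, or there are many disjoint $\mu'$ with $f_{\sk}(x^{\mu' \cdot \mu}) \neq 0$ for all $x$. This step is insensitive to whether the public key is classical or quantum, since it concerns only the classical secret-key generation.

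The role of the uniqueness hypothesis is to turn a regenerated secret key into a correct decryption. First observe that completeness implies that \emph{any} $\sk'$ with the same public key decrypts $\ct$ correctly: if $P^x(\sk') = P^x(\sk) = \ket{\pk}$, then $\ct = \enc^x(\ket{\pk}, k)$ is also a valid ciphertext for $\sk'$, so $\dec^x(\sk', \ct) = k$. Hence Eve need not reproduce $\sk$ exactly, and because $\ket{\pk}$ is uniquely determined by $\sk$ the notion of a ``consistent'' secret key is well defined. Eve therefore reprograms the oracle by $\mu$, runs $\qkeygen^{x^\mu}$ to obtain a candidate $\sk'$, uses the held copy of $\ket{\pk}$ to test consistency by recomputing $P^{x^\mu}(\sk')$, and outputs $\dec^{x^\mu}(\sk', \ct)$. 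In the first win-win case, $f_{\sk}(x^\mu) \neq 0$ guarantees a consistent $\sk'$ is generated and the reprogrammed decryption yields $k$; in the second case, an averaging argument over the many disjoint $\mu'$ produces one on which the decryption (and $P$) has small query weight, so running $\dec^{x^\mu}$ rather than $\dec^{x^{\mu' \cdot \mu}}$ changes the output by only $O(\eps)$.

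The main obstacle is to reconcile the classical-secret-key polynomial analysis with the single, unclonable copy of the quantum public key. Two points need care. First, the disturbance analysis must now cover not only Bob's (possibly quantum) ciphertext, as in \Cref{thm:thm1} and \Cref{thm:thm3}, but also the public-key-generation procedure $P^x$, whose oracle queries could be perturbed by reprogramming; the uniqueness hypothesis is exactly what lets us treat $\ket{\pk}$ as a deterministic, and hence faithfully regenerable, function of $\sk$ on $x^\mu$. Second, the consistency test against $\ket{\pk}$ must not corrupt the one copy Eve ultimately relies on — here I would lean on completeness so that, in the good win-win branch, a correctly regenerated $\sk'$ matches the public key automatically and the explicit check can be made gentle or avoided. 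Establishing that this branch carries non-negligible weight after reprogramming, uniformly over the quantum public key, is the delicate quantitative heart of the argument.
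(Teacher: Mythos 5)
There is a genuine gap, and it sits exactly where you point: your attack has no working mechanism for producing the candidate secret key, and the substitute you propose cannot be repaired. First, a circularity: \Cref{lem:reprogram} constructs $\mu$ \emph{from an explicitly known polynomial} $f$, but your polynomial $f_{\sk}(x)=\Pr[\qkeygen^x\to\sk]$ is indexed by Alice's secret key, which Eve does not know. In the paper's attack the order is reversed: Eve \emph{first} obtains a concrete fake key $\sk'$ — by learning Bob's heavy queries $R_E$ (Step 1) and then driving the conditional mutual information $I(\A:\B\mid\E)$ down and applying the Fawzi--Renner reconstruction channel of \Cref{thm:quantum_mutual_info_operational} (Step 2) — and only \emph{then} defines $g(x)=\Pr[\calA_0^H\to\sk']$, $f(x)=g(x^{R_E})$, and runs \Cref{lem:reprogram}. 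Your proposal omits Steps 1--2 entirely, and without $R_E$ even the disturbance analysis collapses: the reprogrammed points must be guaranteed to avoid Bob's heavy queries for \Cref{lem:BBBV} to keep the ciphertext meaningful, and that guarantee comes from defining $f$ on $x^{R_E}$ so that $\supp(\mu)\cap\supp(R_E)=\emptyset$.

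Second, your replacement — run $\qkeygen^{x^\mu}$ fresh and rejection-sample against the held copy of $\ket{\pk}$ — fails for two independent reasons. A fresh run of key generation outputs a key whose public key equals the specific $\ket{\pk}$ Alice published only with probability that can be exponentially small, so polynomially many trials do not suffice; and equality of an unknown quantum state cannot be certified from one (or polynomially many) copies — a swap test accepts unequal states with constant probability, so "test consistency by recomputing $P(\sk')$" is not a sound filter. Moreover, what perfect completeness actually requires is that $(\sk',\ket{\pk},k_B,\ct,R_E)$ be \emph{jointly} compatible with a single oracle close to the reprogrammed one; matching the public key alone is not enough. The paper gets this joint compatibility from the reconstruction channel (trace distance $\le\eps$ to the real joint state), plus an uncomputation argument showing the fake $\sk'$ regenerates $\ket{m_0}$ (this is where the no-query hypothesis on $\pkgen$ enters — note your $P^x$ notation misreads the hypothesis, since $P$ is oracle-independent by definition), plus a cq-state decomposition of Bob's output and a Markov-inequality step that is responsible for the final $1-O(\sqrt{\eps})$ bound. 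The part you defer as "the delicate quantitative heart" is precisely the content of the paper's proof, so the proposal as written does not establish the theorem.
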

Here, a public key $ \ket{\pk}$ is said to be uniquely determined by a secret key $\sk$ if it can be generated by a procedure that depends only on $\sk$ but not on the random oracle.
Notably, even when public keys depend solely on secret keys, it remains unclear how to dequantize the quantum state into a classical string. As a result, we cannot directly reduce \Cref{thm:thm2} to \Cref{thm:thm1} and must instead adopt a different approach. 

Our impossibility result is tight for all known QPKE constructions with quantum keys~\cite{coladangelo2023quantum,MW24,KMNY24,barootiPublicKeyEncryptionQuantum2023}, as they all require $\ket{\pk}$ to depend on both $\sk$ and the random oracle.

\subsection{Other Related Works}
There have also been attempts to separate QCCC key agreement from weaker primitives in ``Nanocrypt'' (primitives that are presumably weaker than one-way functions). The separation between the worlds ``Nanocrypt'' and Cryptomania in the QCCC setting has also been studied in~\cite{ananthCryptographyCommonHaar2025}, where the authors showed that there is no black box construction from long pseudorandom function-like states to QCCC key agreement schemes. 

Our result proves a black box barrier to constructing public key encryption from one-way functions in the QCCC setting. A different line of research shows that, unlike the classical hierarchy, it is also impossible to have black-box constructions of one-way functions from QCCC key agreements. In~\cite{goldinCountCryptQuantumCryptography2024}, the authors show relative to a unitary oracle, it is possible to obtain QCCC key agreement while BQP=QCMA. Another work by Kretschmer et al.~\cite{kretschmerQuantumComputableOneWayFunctions2025} shows that relative to a classical oracle, it is possible to obtain a quantum computable trapdoor one-way function, while P=NP. As trapdoor OWFs imply public key encryption, it indicates that quantum public key encryption can exist even without one-way functions in the QCCC setting.
We believe that our work together with their works provides a more complete characterization of the relation between one-way functions and other QCCC primitives.


\subsection{Technical Overview}

\subsubsection*{The quantum Markov chain method and the limitation of prior works.}

The method was first proposed by~\cite{own} in the context of quantum cryptography. They consider the following scenario, where they construct a two-round key agreement from PKE: 
\begin{enumerate}
    \item Alice first runs the key generation algorithm (with oracle access to $H$), sends the public key $m_0 := \pk$ to Bob and keeps the secret key $\sk$ herself.
    \item Bob then encrypts a random key $k$ by running the encryption algorithm (with oracle access to $H$) $\enc(\pk,k)\to \ct$, and sends back the ciphertext $m_1 := \ct$ to Alice. 
    \item Alice runs the decryption algorithm (with oracle access to $H$) to retrieve the key $k$.
\end{enumerate}
One can, without loss of generality, assume Bob has a small query weight on every input. This step can be guaranteed by running, measuring Bob multiple times and thus removing all heavy queries. We will handle it carefully in the main body; for the purpose of the overview, we simply assume Bob has no heavy queries. 

Their idea is to create some Eve such that the conditional mutual information $I(\A: \B|\E)\leq \epsilon$ by making $\poly(1/\epsilon)$ queries, which implies the three systems form an approximate Markov chain. By the operational meaning of the quantum Markov chain shown in~\cite{fawzi2015quantum}, there exists some channel $\mathcal{T}\colon \E\to\E\otimes \A'$ that generates a copy of Alice system, while guaranteeing the joint state $\sigma_{\A'\E\B}=\mathcal{T}(\rho_{\E\B})$ is $O(\sqrt{\epsilon})$ close to the original joint state $\rho_{\A\E\B}$. Here $\rho_{\A\E\B}$ and $\sigma_{\A'\E\B}$ are states right before Alice runs the decryption algorithm. The main difficulty that \cite{own,own:ITCS} aimed to solve is to execute the rest of Alice and produce the key, due to the potential inconsistency of $\A'$ and the real random oracle $H$. Intuitively, the simulated Alice's view is unlikely to match with the real random oracle; and any random oracle that is compatible with Alice's view does not automatically work with Bob's view.

In \cite{own}, they manage to solve the problem by enforcing the key generation algorithm to only make classical queries. By measuring the register $\A'$, they can generate a classical query record $R_{\A'}$ of polynomial size, and Eve will simulate the run of $\A$ on the oracle reprogrammed by $R_{\A'}$, denoted by $H^{R_{\A'}}$. They show that the simulated Alice's view, Bob's view, all transcripts and the oracle $H^{R_{\A'}}$ will be in the support of real executions of the protocol. Thus, by the perfect correctness of the protocol, we can conclude that Eve will get the correct key by running the decryption algorithm. 

We summarize their algorithm as in~\Cref{fig:LLL24_algo}. It relies on maintaining a polynomial-sized query record $R_\A$ to ensure that reprogramming does not significantly disturb Bob’s state (since we assume that Bob has no heavy query). Therefore, extending their result to a quantum key generation process seems difficult, as query transcripts are no longer well-defined in the quantum setting.

\begin{figure}[hbt!]
    \centering
\begin{mdframed}
\textbf{Algorithm (Eve’s Attack) in~\cite{own} (Informal).}
\par\textbf{Input:} random oracle $H:[N]\to\{0,1\}$ with $N=2^n$, transcripts $(m_0,m_1)$, parameter $\epsilon$.
\par\textbf{Output:} key $k_E$ such that $\Pr[k_E=k_A]=1-O(\sqrt{\epsilon})$.
\begin{itemize}
  \item \textbf{Step 1: Sample a simulated Alice's view.}
  
  Let $\A$ denote the registers of Alice, and $\B$ denote the registers of Bob right after Bob sends out the second message $m_1$.

  Run Bob repeatedly with input $m_0$ and oracle access to $H$ polynomial many times in $n, 1/\epsilon$. At this moment, $I(\A: \B|\E) \leq \varepsilon$.

  Apply the reconstruction channel
  $\mathcal{T}:\E\!\to\!\E\otimes\A'$ from~\cite{fawzi2015quantum}
  to generate a fake copy $\A'$ of $\A$, then measure $\A'$ to obtain a simulated secret key $\sk'$ and classical transcript $R_{\A'}$.

  \item \textbf{Step 2: Reprogram the oracle $H$ and run the decryption algorithm.}

  Run Alice with $m_1$, the simulated $\sk'$ and oracle access to $H^{R_{\A'}}$. Obtain a key $k_E = k'$ and output it.
\end{itemize}
\end{mdframed}
\caption{Framework of Eve's attack algorithm in \cite{own}.}
\label{fig:LLL24_algo}
\end{figure}

In a following paper~\cite{own:ITCS}, the authors introduce a view from boolean function analysis in an attempt to attack PKE with quantum key generation. It is known that the probability of Alice outputting certain $(\sk',m_0=\pk)$ can be written as a low-degree polynomial $f(H)$, here $H$ is treated as the truth table of the random oracle. 
The key observation is that if there is some polynomial-sized partial assignment $\mu$ such that $f(H^{\mu})\neq 0$ for all $H$\footnote{In other words, the oracle $H^\mu$ can still produce $(\sk, m_0)$ with a non-zero probability. With perfect completeness, we can argue that under the oracle $H^{\mu}$, Alice can still produce the correct key.}, the partial assignment $\mu$ can replace the $R_{\A'}$ in~\cite{own} and the algorithm can reprogram the oracle on the points defined by the partial assignment $\mu$. The authors make some conjecture on the existence of a distribution on such partial assignment $\mu$. They prove based on the conjecture that, there exists a separation between PKE and OWF in the QCCC model. 

\subsubsection*{Improving the separation framework.}

As explained above, the main obstacle is to construct an oracle that is consistent with Bob’s view and with the simulated view of Alice. Our contribution is a refinement of the prior framework which eventually enables us to complete the proof.
Following~\cite{own,own:ITCS}, if we can find an oracle $H'$ that meets all consistency conditions, then Eve can recover the correct key by running the decryption algorithm (as in Step 2) --- this remains true even if the decryption algorithm has unbounded queries! Even though the attack is inefficient in this case, prior attacks did not exploit the fact that decryption makes only a polynomial number of random-oracle queries.
The framework in~\Cref{fig:LLL24_algo} looks for an $H'$ such that $(\sk',H')$ in the support of  real executions. We relax this to computational closeness: i.e., $(\sk', H')$ only needs to be computationally close to some real execution. This weaker requirement still guarantees that the query-bounded algorithm (the decryption algorithm) produces outputs that are sufficiently close.

By leveraging the idea, we show that we can find $H'$ satisfying \emph{either} $(\sk', H')$ is in the support of real executions, \emph{or} $(\sk', H')$ is computationally close to some real execution. More specifically, we show a \emph{win-win situation} that Eve can efficiently find a partial assignment $\mu$:

\begin{itemize}
    \item Case (a). \emph{Either}, the probability that Alice with oracle access to $H^{\mu}$ outputs $(\sk',m_0)$ is non-zero. 
    \item Case (b). \emph{Or}, there exists $\mu'$ (may not be efficiently computable by Eve) such that the probability that Alice with oracle access to $H^{\mu \cdot \mu'}$ outputs $(\sk',m_0)$ is non-zero and $(\sk', H^{\mu \cdot \mu'})$ and $(\sk', H^{\mu})$ are indistinguishable by the decryption algorithm.
    \item Finally, $\mu$ and $\mu'$ must both be polynomial-sized. This is to guarantee that the reprogrammed oracle $H^\mu$ or $H^{\mu \cdot \mu'}$ are consistent with Bob who makes no heavy query. 
\end{itemize}

\begin{figure}[hbt!]
    \centering
\begin{mdframed}
\textbf{Algorithm (Eve’s Attack) in this work (Informal).}
\par\textbf{Input:} random oracle $H:[N]\to\{0,1\}$ with $N=2^n$, transcripts $(m_0,m_1)$, parameter $\epsilon$.
\par\textbf{Output:} key $k_E$ such that $\Pr[k_E=k_A]=1-O(\sqrt{\epsilon})$.
\begin{itemize}
  \item \textbf{Step 1: Sample a simulated Alice's view.}

   Identical to Step 1 in~\Cref{fig:LLL24_algo}, measure the simulated Alice's view to get $\sk'$.

  \item \textbf{Step 2.0: Compute the partial assignment.}

  Find a partial assignment $\mu$ that satisfies either Case (a) or Case (b) above.
    
  \item \textbf{Step 2.1: Reprogram the oracle $H$ and run the decryption algorithm.}

  Run Alice with $m_1$, the simulated $\sk'$ and oracle access to $H^{\mu}$. Obtain a key $k_E$ and output.
\end{itemize}
\end{mdframed}
\caption{Improved framework of Eve's attack algorithm in this work.}
\label{fig:our_algo}
\end{figure}

Assuming Eve can find such a $\mu$ with polynomial number of queries, the above framework will produce a key with probability close to $1$. 

\subsubsection*{Compute the partial assignment: a win-win argument.}

We discuss our main idea: how Eve can compute such a partial assignment $\mu$ of polynomial size. Let $f(H)$ be the polynomial as described above, denoting the probability that on oracle $H$, Alice outputs $(\sk',m_0)$. Since Alice only makes a polynomial number of queries, $f$ has a low degree $d$. Given the fact that on some oracle, Alice can output $(\sk',m_0)$; $f$ must be not identically zero. Furthermore, since $f$ is completely determined by $(\sk',m_0)$, Eve knows $f$. 

Eve starts by a \textbf{maximal\footnote{Here, maximal means that for every maximum monomial $w\notin S$, the set $S\cup\{w\}$ is not pairwise disjoint.} set} $S$ of \textbf{pairwise disjoint} maximum monomials $\{w_1, w_2, \ldots, w_t\}$, where a maximum monomial is defined as a monomial of degree equal to that of $f$. 
In other words, each $w_i$ consists of $\deg(f)$ variables and every $w_i, w_j$ are mutually disjoint. 
By~\cite{alon99,mid04}, for every monomial $w_i$ and every $H$, there will be a partial assignment $\mu_i$ on variables in $w_i$, such that $f(H^{\mu_i}) \neq 0$.

If $S$ consists of a lot of such monomials (say more than some threshold $m$, which is a polynomial), we can argue that the current random oracle $H$ together with $\sk'$ will be close to $(H', \sk')$  for some real executions, in the view of the decryption algorithm. This is because the decryption algorithm will only make at most a polynomial number of queries; thus, at least for some $w_i$, the decryption algorithm will have a very small query weight on any variables associated with $w_i$. By~\cite{BBBV97}, a small query weight means a small total variation distance between these two cases with oracle access to $H$ or oracle access to $H^{\mu_i}$. Therefore, Eve does not need to reprogram $H$ at all and $H$ itself (or an empty partial assignment) will work.

If $S$ only has fewer maximum monomials, every monomial may have a big query weight; thus, \cite{BBBV97} no longer works. In this case, Eve knows the total number of variables with $S$ is $t \cdot \deg(f) \leq m \cdot \deg(f)$, which is polynomial in $n$. Eve can just fix all variables in $S$. Eve finds a partial assignment $\mu$ for all variables in $S$, such that $f(H^\mu) \not\equiv 0$, i.e., the function $f(H^\mu)$ is not always zero; this again can be guaranteed by~\cite{alon99,mid04}. Since $S$ is the maximal set, fixing all variables in $S$ will reduce the degree of $f$ by at least one! Otherwise, Eve can find another monomial of degree $\deg(f)$ with all associated variables being disjoint with $S$, contradicting with $S$ being maximal.

Therefore, in this case, Eve reduces the degree of $f$ by at least one,  and only fixes at most polynomial number of variables. Eve can then repeat the whole process at most $\deg(f)$ times, either at some point it finds a lot of disjoint maximum monomials, or it reduces the degree to zero and eventually the polynomial becomes a constant non-zero function. In either case, Eve finds a satisfying $\mu$. We give the algorithm of finding a satisfying $\mu$ in \Cref{fig:compute_mu}.

\begin{figure}[hbt!]
    \centering
\begin{mdframed}
\textbf{Compute a partial assignment $\mu$.}
\par\textbf{Input:} a non-zero polynomial $f$ of degree $d$, some threshold $m$.
\par\textbf{Output:} a partial assignment $\mu$.
\begin{itemize}
  \item Initiate $\tilde{\mu} \gets \emptyset$ and $\tilde{f} \gets f$.
  \item Repeat until $\deg(\tilde{f}) = 0$:
  \begin{enumerate}
      \item Find a maximal set $\cal S$ of maximum monomials in $\tilde{f}$. 
      \item If $|\mathcal{S}| > m$: Break.
      \item Otherwise, find an assignment $\eta$ for variables in $\cal S$, such that $\tilde{f}(H^{\eta}) \not\equiv 0$.
      \begin{itemize}
          \item Let $\tilde{\mu} \gets \tilde{\mu} \cdot \eta$ and $\tilde{f}(H) := f(H^{\tilde{\mu}})$.
      \end{itemize}
  \end{enumerate}
  \item Output $\mu := \tilde{\mu}$.
\end{itemize}
\end{mdframed}
\caption{How to compute $\mu$.}
\label{fig:compute_mu}
\end{figure}

\section{Preliminaries}


We assume familiarity with the basics of quantum computing and quantum information. For a comprehensive background, we refer the reader to \cite{qcqi}. Below, we present some backgrounds that are heavily used in this
work.

\subsection{Distance Measures}

We recall the definitions of total variation distance and trace distance.

\begin{definition}[Total Variation Distance]
    Given two probability distributions $D_X$ and $D_Y$ over a finite domain
    $\mathcal{X}$, the total variation distance between them is defined as
    \[
        TV(D_X,D_Y) = \frac{1}{2} \sum_{x \in \mathcal{X}} \left| D_X(x) - D_Y(x) \right|.
    \]
    Here, $D_X(x)$ and $D_Y(x)$ denote the probability of $x$ drawn from $D_X$ and $D_Y$ respectively.
\end{definition}

\begin{definition}[Trace Distance]
    For any two quantum states $\rho$ and $\sigma$, the trace distance is
    defined by
    \[
        TD(\rho,\sigma) = \frac{1}{2}\Tr\left[\sqrt{(\rho-\sigma)^\dagger (\rho-\sigma)}\right] = \sup_{0\leq\Lambda\leq I}\Tr\left[\Lambda(\rho-\sigma)\right].
    \]
\end{definition}

The following lemma is standard (e.g., see~\cite{own}). We include the proof for completeness.
\begin{lemma}\label{lem:support} 
    For two probability distributions $D_X$ and $D_Y$ over the same classical domain, if $TV(D_X,D_Y)\leq \epsilon$, we have that
    \begin{align*}
        \Pr_{x\leftarrow D_X}[x\notin\supp(D_Y) ]\leq 2\epsilon.
    \end{align*}
\end{lemma}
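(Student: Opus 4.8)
The plan is to argue directly from the definition of total variation distance, with no additional machinery required. The key observation is that the event $x \notin \supp(D_Y)$ is precisely the event that $D_Y$ assigns zero probability to $x$; that is, $\{x : x \notin \supp(D_Y)\} = \{x : D_Y(x) = 0\}$. On exactly this set of points the per-coordinate discrepancy collapses, since $|D_X(x) - D_Y(x)| = |D_X(x) - 0| = D_X(x)$ whenever $D_Y(x)=0$. This simple identity is what lets me convert the probability mass we care about into a piece of the total variation sum.

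With that in hand, I would write the quantity to be bounded as a sum restricted to the zero-set of $D_Y$, then enlarge the index set to the full domain. Concretely,
\[
    \Pr_{x \leftarrow D_X}[x \notin \supp(D_Y)] = \sum_{x : D_Y(x) = 0} D_X(x) = \sum_{x : D_Y(x) = 0} \left| D_X(x) - D_Y(x) \right| \leq \sum_{x \in \mathcal{X}} \left| D_X(x) - D_Y(x) \right|.
\]
The final inequality is justified because every summand is nonnegative, so dropping the restriction $D_Y(x)=0$ can only increase the sum.

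Finally, I would recognize the last expression as twice the total variation distance, using the definition $TV(D_X,D_Y) = \tfrac{1}{2}\sum_{x} |D_X(x) - D_Y(x)|$, so the chain terminates at $2\,TV(D_X,D_Y) \leq 2\epsilon$ by hypothesis. The only point worth tracking carefully is the factor of two: it arises entirely from the $\tfrac12$ normalization in the definition of $TV$ and is the sole reason the bound is $2\epsilon$ rather than $\epsilon$. There is no genuine obstacle here — the statement is a one-line consequence of unwinding the definitions — so the \emph{hard part} is purely the bookkeeping that relates the restricted sum over $\{x : D_Y(x)=0\}$ to the full $TV$ sum and correctly carries the normalization constant.
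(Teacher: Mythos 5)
Your proof is correct and is essentially identical to the paper's own one-line argument: both bound $\Pr_{x\leftarrow D_X}[x\notin\supp(D_Y)]=\sum_{x:D_Y(x)=0}D_X(x)$ by the full sum $\sum_x|D_X(x)-D_Y(x)|=2\,TV(D_X,D_Y)\leq 2\epsilon$, with your writeup merely making explicit the observation that $D_X(x)=|D_X(x)-D_Y(x)|$ on the zero-set of $D_Y$. No differences worth noting.
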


\begin{proof}
       $ \sum_{x\notin\supp(D_Y) }D_X(x)\leq\sum_{x}\big|D_X(x)-D_Y(x)\big|=2TV(D_X,D_Y)\leq2\epsilon.$
\end{proof}

\subsection{Quantum Oracle Model and Random Oracle}
In the quantum oracle model, a quantum algorithm $\As$ can make quantum queries to an oracle function $H :[2^{n}] \to \{0,1\}$ via a unitary transformation $U_H$ mapping $\ket{i,b}$ to  $\ket{i,b\oplus H(i)}$. We denote such an algorithm by $\As^H$, which can be expressed as a sequence of unitaries: $U_1$, $U_H$, $U_2$, $U_H$, $\ldots$, $U_d$, $U_H$, $U_{d+1}$. Here $U_1,\ldots,U_{d+1}$ are local unitaries acting on $\As$'s internal register. 


\begin{definition}[Query Weight] \label{def:query_weight}
    Consider a quantum algorithm $\As$ that makes $d$ queries to an oracle $H$. Denote the quantum state immediately after $t$ queries to the oracle as 
\begin{align*}
    \ket{\psi_t}=\sum_{i,w}\alpha_{i,w,t}\ket{i,w},
\end{align*}
where $w$ is the content of the workspace register. Define the query weight $q_i$ of input $i$ as 
\begin{align*}
    q_i=\sum_{t=1}^d\sum_{w}|\alpha_{i,w,t}|^2.
\end{align*}
\end{definition}

\begin{lemma}[\cite{BBBV97}]\label{lem:BBBV} Consider two oracles $H,
    \tilde{H}$, and a quantum query algorithm $\As$ which makes $d$ queries. Let
    $\ket{\psi_{d}}$ and $\ket{\phi_{d}}$ denote the final state before
    measurement when running $\As$ on $H$ and $\tilde{H}$ respectively, and
    $q_i$ denote the query weight of input $i$ when running $\As$ on $H$. Then
    we have that
    \begin{align*}
        \lVert\ket{\psi_d}-\ket{\phi_d}\rVert\leq 2\sqrt{d}\sqrt{\sum_{i\colon\Tilde{H}(i)\neq H(i)}q_i}.
    \end{align*}
\end{lemma}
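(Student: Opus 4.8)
The plan is to prove \Cref{lem:BBBV} by a standard hybrid (telescoping) argument over the $d$ queries: I swap the oracle from $H$ to $\tilde{H}$ one query at a time and control the perturbation introduced at each swap using the query weights.

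First I would set up a sequence of hybrid final states $\ket{\eta_0},\ldots,\ket{\eta_d}$, where $\ket{\eta_j}$ is the state produced by running $\As$ with the oracle $H$ on the first $j$ queries and with $\tilde{H}$ on the remaining $d-j$ queries. By construction $\ket{\eta_d}=\ket{\psi_d}$ and $\ket{\eta_0}=\ket{\phi_d}$, so the triangle inequality gives $\norm{\ket{\psi_d}-\ket{\phi_d}}\le \sum_{j=1}^{d}\norm{\ket{\eta_j}-\ket{\eta_{j-1}}}$. The consecutive hybrids $\ket{\eta_j}$ and $\ket{\eta_{j-1}}$ differ only in whether the $j$-th query is answered by $H$ or by $\tilde{H}$; since every subsequent operation (the local unitaries and the queries $j+1,\ldots,d$, all answered by $\tilde{H}$) is identical and norm-preserving in both branches, each term collapses to $\norm{(U_H-U_{\tilde{H}})\ket{\chi_j}}$, where $\ket{\chi_j}$ is the state fed into the $j$-th query. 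Crucially, because the first $j-1$ queries in this hybrid use $H$, the state $\ket{\chi_j}$ coincides with the corresponding state of the reference run of $\As$ on $H$ — this is precisely why the hybrids must swap the \emph{early} queries to $H$ rather than the late ones.

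Next I would bound each single-query term. The query unitaries $U_H$ and $U_{\tilde{H}}$ act identically on every input register value $i$ with $H(i)=\tilde{H}(i)$, and differ only on the subspace projected by $P_S$, the projector onto inputs in $S:=\{i:\tilde{H}(i)\neq H(i)\}$; on that subspace a controlled bit-flip discrepancy has operator norm at most $2$, so $\norm{(U_H-U_{\tilde{H}})\ket{\chi_j}}\le 2\norm{P_S\ket{\chi_j}}$. Since $U_H$ leaves the input register untouched, $P_S$ commutes with the query unitary, and hence the weight of $\ket{\chi_j}$ on inputs in $S$ equals the weight of the post-query state $\ket{\psi_j}$ on $S$, namely $\sum_{i\in S}\sum_w|\alpha_{i,w,j}|^2$ in the notation of \Cref{def:query_weight}. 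Summing over $j$ and applying Cauchy–Schwarz to extract the factor $\sqrt{d}$ then yields $\sum_{j=1}^{d}2\sqrt{\sum_{i\in S}\sum_w|\alpha_{i,w,j}|^2}\le 2\sqrt{d}\,\sqrt{\sum_{i\in S}q_i}$, which is exactly the claimed bound.

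The only genuinely delicate point, and the step I would treat most carefully, is the alignment between the definition of the query weight and the hybrid. The weight $q_i$ is accumulated over the \emph{post-query} states $\ket{\psi_t}$ of the reference run on $H$, whereas the perturbation in the hybrid naturally acts on the \emph{pre-query} states $\ket{\chi_j}$. Orienting the hybrids so that the early queries use $H$ forces the relevant pre-query states to agree with the reference run, and observing that $U_H$ preserves the input register lets me transfer the weight between pre- and post-query states at no cost. Everything else is the triangle inequality, the elementary operator-norm bound of $2$, and Cauchy–Schwarz, so no further obstacle is expected.
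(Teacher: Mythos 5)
Your proof is correct. The paper does not prove \Cref{lem:BBBV} at all---it imports the statement directly from \cite{BBBV97}---and your argument is precisely the canonical hybrid argument from that reference: the telescoping over hybrids that switch the $j$-th query, the operator-norm bound $\lVert (U_H-U_{\tilde H})P_S\rVert\le 2$, and Cauchy--Schwarz to extract the $\sqrt{d}$ factor. You also correctly handle the one subtlety specific to this paper's formulation (the query weight in \Cref{def:query_weight} is defined via \emph{post}-query states, while the hybrid perturbation acts on \emph{pre}-query states), resolving it via the commutation of $P_S$ with the query unitary; nothing further is needed.
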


We will also consider the quantum random oracle model (QROM). In this setting, a quantum algorithm has access to a random oracle $H:[2^{n_\lambda}]\to\{0,1\}$, which is chosen from the uniformly random distribution over all functions mapping $[2^{n_{\lambda}}]$ to $\{0,1\}$.

\subsection{Entropy and Information}
\begin{definition}[Von Neumann Entropy]
    Let $\rho \in \mathbb{C}^{2^n}$ be a quantum state describing the system
    $\A$, and let $\ket{\phi_1}, \ket{\phi_2}, \dots, \ket{\phi_{2^n}}$ be an
    eigenbasis for $\rho$, so that 
    \[
        \rho = \sum_i \eta_i \ket{\phi_i}\bra{\phi_i}.
    \]
    The Von Neumann entropy of $\rho$, denoted by $S(\rho)$ or $S(\A)_\rho$, is
    defined as
    \[
        S(\A)_\rho = S(\rho) = -\sum_i \eta_i \log(\eta_i).
    \]
    For a composite system $\A\B$ with joint state $\rho_{\A\B}$, the conditional Von Neumann entropy is defined by
    \[
        S(\A|\B)_\rho = S(\A\B)_\rho - S(\B)_\rho.
    \]
\end{definition}

In the following, we will omit the subscript $\rho$ when the quantum state is
clear from context. For example, we will write $S(\A)$ instead of $S(\A)_\rho$,
and $I(\A:\B)$ instead of $I(\A:\B)_\rho$.

\begin{definition}[Mutual Information]
    Given a quantum state $\rho$ that describes the joint systems $\A$ and $\B$,
    the mutual information between $\A$ and $\B$ is given by
    \[
        I(\A:\B) = S(\A) + S(\B) - S(\A\B).
    \]
\end{definition}

\begin{definition}[Conditional Mutual Information, CMI]
    Let $\rho$ be a quantum state describing the three joint systems $\A$, $\B$, and $\C$. Then the conditional mutual information is defined as
    \[
        I(\A:\B|\C) = S(\A\C) + S(\B\C) - S(\A\B\C) - S(\C).
    \]
\end{definition}

The strong subadditivity property states that both the mutual information and
the conditional mutual information are always non-negative.
\begin{lemma}[Strong Subadditivity, \cite{araki1970entropy}]
\label{lem:strong_subadditivity}
    For Hilbert spaces $\A$, $\B$, and $\C$, it holds that
    \[
        S(\A\C) + S(\B\C) \geq S(\A\B\C) + S(\C).
    \]
    In its conditional form, for Hilbert spaces $\A$, $\B$, $\C$, and $\D$, we have
    \[
        S(\A\C|\D) + S(\B\C|\D) \geq S(\A\B\C|\D) + S(\C|\D).
    \]
\end{lemma}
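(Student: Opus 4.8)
The plan is to prove the inequality in the equivalent and standard form of nonnegativity of the conditional mutual information, namely $I(\A:\B|\C)=S(\A\C)+S(\B\C)-S(\A\B\C)-S(\C)\ge 0$ (matching the definition of CMI above), and to obtain it from a single analytic input: monotonicity of quantum relative entropy under quantum channels. Throughout, write $S(\rho\|\sigma)=\Tr[\rho\log\rho-\rho\log\sigma]$ for the relative entropy. The first step is purely algebraic. Using the chain rule $I(\A:\B|\C)=I(\A:\B\C)-I(\A:\C)$ together with the identities $I(\A:\B\C)=S(\rho_{\A\B\C}\,\|\,\rho_\A\otimes\rho_{\B\C})$ and $I(\A:\C)=S(\rho_{\A\C}\,\|\,\rho_\A\otimes\rho_\C)$ — each verified by expanding $\log(\rho_\A\otimes\rho_{\B\C})=\log\rho_\A\otimes I+I\otimes\log\rho_{\B\C}$ — the target inequality $I(\A:\B|\C)\ge 0$ becomes exactly $I(\A:\B\C)\ge I(\A:\C)$. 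Since $\Tr_\B[\rho_{\A\B\C}]=\rho_{\A\C}$ and $\Tr_\B[\rho_\A\otimes\rho_{\B\C}]=\rho_\A\otimes\rho_\C$, this is precisely the assertion that the partial-trace channel $\Tr_\B$ does not increase relative entropy:
\[
S\!\left(\Tr_\B[\rho_{\A\B\C}]\,\middle\|\,\Tr_\B[\rho_\A\otimes\rho_{\B\C}]\right)\le S\!\left(\rho_{\A\B\C}\,\middle\|\,\rho_\A\otimes\rho_{\B\C}\right).
\]

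Second, I would reduce monotonicity under an arbitrary CPTP map $\mathcal{N}$ to the case of partial trace. By the Stinespring dilation, $\mathcal{N}(\cdot)=\Tr_{E}[V(\cdot)V^\dagger]$ for an isometry $V$; since relative entropy is invariant under isometries, monotonicity under $\mathcal{N}$ follows from monotonicity under $\Tr_{E}$. The latter is in turn equivalent, by Lindblad's argument, to joint convexity of the map $(\rho,\sigma)\mapsto S(\rho\|\sigma)$: the partial trace over a $d$-dimensional system can be realized (up to tensoring a fixed maximally mixed state, which leaves relative-entropy differences unchanged) as the uniform average of conjugations by a unitary $1$-design, so unitary invariance together with joint convexity yields the monotonicity. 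With the unconditional inequality in hand, the conditional form with the spectator system $\D$ is immediate: regrouping $\C\D$ as a single subsystem gives $S(\A\C|\D)+S(\B\C|\D)-S(\A\B\C|\D)-S(\C|\D)=S(\A\C\D)+S(\B\C\D)-S(\A\B\C\D)-S(\C\D)=I(\A:\B|\C\D)\ge 0$.

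The genuine obstacle is the remaining analytic core, \emph{joint convexity of the relative entropy} $(\rho,\sigma)\mapsto S(\rho\|\sigma)$, which I expect to absorb essentially all of the work and which cannot be reduced to scalar convexity because $\rho$ and $\sigma$ need not commute. The clean route is Lieb's concavity theorem: for every fixed $s\in[0,1]$ and every fixed operator $K$, the map $(X,Y)\mapsto\Tr[K^\dagger X^{s}KY^{1-s}]$ is jointly concave on positive semidefinite operators. Joint convexity of $S(\rho\|\sigma)$ is a by-now-standard consequence of this theorem (Lindblad, Lieb--Ruskai), obtained by a derivative/limit argument in the parameter $s$. Proving Lieb's concavity itself is where the difficulty concentrates; I would establish it via the operator convexity of $t\mapsto-\log t$ and the operator Jensen inequality, or equivalently via a complex-interpolation (three-lines) argument applied to the analytic family $z\mapsto\Tr[K^\dagger X^{s}KY^{z}]$. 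Everything outside this operator-theoretic lemma is bookkeeping: the operator — as opposed to scalar — nature of the concavity is the crux and the only place a nontrivial new idea is required.
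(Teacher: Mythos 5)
The paper does not actually prove this lemma: it is stated purely as a citation to the literature (Araki--Lieb / Lieb--Ruskai), and no proof appears in the body or the appendix, so there is no paper proof to compare your attempt against; it must be judged on its own. On its own terms, your outline follows the standard and correct route: rewriting nonnegativity of $I(\A:\B|\C)$ as monotonicity of relative entropy under the partial-trace channel $\Tr_\B$ applied to the pair $\left(\rho_{\A\B\C},\,\rho_\A\otimes\rho_{\B\C}\right)$, reducing monotonicity under a general channel to the partial-trace case via Stinespring, reducing that in turn to joint convexity by Lindblad's unitary-mixing trick, and resting the whole argument on Lieb's concavity theorem. The spectator-system reduction for the conditional form is also right: the $S(\D)$ terms cancel and the claim collapses to $I(\A:\B|\C\D)\geq 0$. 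One caveat: this is an outline rather than a complete proof, since the analytic core you yourself identify (Lieb's concavity, or equivalently joint convexity of relative entropy, e.g.\ via the operator-perspective argument from operator convexity of $-\log$) is named but never carried out; the argument is incomplete exactly where you say the difficulty concentrates.

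A second point worth flagging: you silently proved a \emph{corrected} statement. The lemma as printed, $S(\A\C)+S(\A\B)\geq S(\A\B\C)+S(\C)$, is false as literally written --- take $\A$ trivial and $\rho_{\B\C}$ a product of two maximally mixed qubits, and it would force $S(\B)\geq S(\B\C)$. The intended inequality, the one matching the paper's definition of conditional mutual information and the one your proposal establishes, is $S(\A\C)+S(\B\C)\geq S(\A\B\C)+S(\C)$ (equivalently, keep $S(\A\B)$ on the left but put $S(\A)$, not $S(\C)$, on the right). Since the paper only ever uses the lemma in the form ``mutual information and conditional mutual information are nonnegative,'' your form is the one actually needed downstream, so the correction is the right one.
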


Fawzi and Renner \cite{fawzi2015quantum} provided an insightful characterization
of quantum states when the conditional mutual information is nearly zero.
Intuitively, a small value of $I(\A:\B|\E)$ indicates that the system $\A$ can
be approximately reconstructed from system $\E$. Formally, 
\begin{theorem}[Approximate Quantum Markov Chain, \cite{fawzi2015quantum}]\label{thm:quantum_mutual_info_operational}
    For any state $\rho_{{\sf AEB}}$ over systems $\A\E\B$, there exists a explicitly constructible channel ${\cal T} : {\sf E} \to {\sf E} \otimes {\sf A}'$ such that the
    trace distance between the reconstructed state $\sigma_{\sf A'EB} = {\cal
    T}(\rho_{\E\B})$ and the original state $\rho_{\sf AEB}$ is at most 
    \begin{align*}
        \sqrt{\ln 2 \cdot I(\A:\B|\E)_\rho}.
    \end{align*}
\end{theorem}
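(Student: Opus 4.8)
The statement is the Fawzi--Renner recovery theorem, so the plan is to pass through the fidelity-based version of quantum-Markov-chain recovery and convert to trace distance only at the very end. Concretely, I would first aim to produce an explicit channel $\calT:\E\to\E\otimes\B'$ and prove the fidelity bound
\[
I(\A:\B|\E)_\rho \;\geq\; -2\log F\!\paren{\rho_{\A\B\E},\ \calT(\rho_{\A\E})},
\]
where $F$ denotes fidelity and $\log$ is base $2$. The starting point is to rewrite the conditional mutual information as a difference of two relative entropies,
\[
I(\A:\B|\E) \;=\; D\paren{\rho_{\A\B\E}\,\big\|\,\rho_\A\otimes\rho_{\B\E}} \;-\; D\paren{\rho_{\A\E}\,\big\|\,\rho_\A\otimes\rho_\E},
\]
which one checks directly by expanding each relative entropy into von Neumann entropies. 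The point of this rewriting is that the two arguments of the second term are obtained from those of the first by the partial trace $\Tr_\B$, so $I(\A:\B|\E)$ is exactly the gap in the data-processing inequality (monotonicity of relative entropy) for the channel $\Tr_\B$ with reference state $\rho_\A\otimes\rho_{\B\E}$; strong subadditivity is the statement that this gap is nonnegative.

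For the recovery channel I would take the rotated, averaged Petz map associated with $\Tr_\B$ and $\rho_\A\otimes\rho_{\B\E}$; after the $\rho_\A$ factors cancel it acts only on $\E$:
\[
\calT(\omega_{\A\E}) \;=\; \int_{-\infty}^{\infty}\!\beta_0(t)\;\rho_{\B\E}^{(1+it)/2}\,\rho_\E^{-(1+it)/2}\,\omega_{\A\E}\,\rho_\E^{-(1-it)/2}\,\rho_{\B\E}^{(1-it)/2}\,dt,
\]
with the universal weight $\beta_0(t)=\tfrac{\pi}{2}\paren{\cosh(\pi t)+1}^{-1}$; averaging against $\beta_0$ is what makes $\calT$ a bona fide quantum channel. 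When $\rho_{\A\B\E}$ is an exact Markov chain ($I(\A:\B|\E)=0$) the rotation is harmless and $\calT$ reduces to the ordinary Petz map, which recovers $\rho_{\A\B\E}$ from $\rho_{\A\E}$ perfectly; the content of the theorem is the robustness of this recovery.

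The main obstacle, and the analytic heart of the proof, is establishing the fidelity bound for this $\calT$ when the marginals $\rho_{\B\E}$ and $\rho_\E$ fail to commute --- it is precisely this non-commutativity that forces the imaginary rotation by $it$ and the averaging over $t$. I would handle it by complex interpolation: build a holomorphic family of operators $X_z$ out of the complex powers $\rho_{\A\B\E}^{z}$, $\rho_{\A\E}^{z}$, $\rho_{\B\E}^{z}$, $\rho_\E^{z}$ for $0\leq\mathrm{Re}(z)\leq 1$, and apply the Hadamard three-line theorem (Stein--Hirschman interpolation) to a suitable Schatten-norm functional of $X_z$. The boundary $\mathrm{Re}(z)=1$ is arranged to reproduce the relative-entropy difference above, i.e. $I(\A:\B|\E)$, while the boundary $\mathrm{Re}(z)=0$ collapses to a fidelity between $\rho_{\A\B\E}$ and the action of the rotated Petz map, with the interpolation measure on that boundary coming out to be exactly $\beta_0(t)$. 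Chaining the three-line bound then yields $F\paren{\rho_{\A\B\E},\calT(\rho_{\A\E})}\geq 2^{-I(\A:\B|\E)/2}$.

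Finally I would convert the fidelity bound into the stated trace-distance bound. From $F\geq 2^{-I(\A:\B|\E)/2}$ we get $F^2\geq 2^{-I(\A:\B|\E)}$, and using $TD\leq\sqrt{1-F^2}$ together with the elementary estimate $1-2^{-x}=1-e^{-x\ln 2}\leq x\ln 2$ for $x\geq 0$ gives
\[
TD\paren{\rho_{\A\B\E},\ \calT(\rho_{\A\E})} \;\leq\; \sqrt{1-2^{-I(\A:\B|\E)}} \;\leq\; \sqrt{\ln 2\cdot I(\A:\B|\E)},
\]
which is exactly the claimed bound. The genuinely hard step is the interpolation argument of the third paragraph; the reduction to relative entropies and the fidelity-to-trace-distance conversion are routine.
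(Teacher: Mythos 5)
The paper contains no proof of \Cref{thm:quantum_mutual_info_operational} for you to be compared against: the theorem is imported verbatim from Fawzi--Renner \cite{fawzi2015quantum} and used as a black box (the appendix only proves the two helper lemmas, \Cref{lem:entropyB} and \Cref{lemma:permuation-invariance}). So your proposal has to be judged against the literature, not against anything in this paper.

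On those terms, your outline is a correct road map, and in fact of a \emph{stronger} statement than the paper needs: you sketch the later, explicit universal-recovery-map route (rotated Petz maps averaged against $\beta_0(t)$, via Stein--Hirschman interpolation, as in Wilde and Junge--Renner--Sutter--Wilde--Winter), whereas the original Fawzi--Renner argument established existence of a state-dependent recovery map by less constructive means. Your elementary reductions all check out: writing $I(\A:\B|\E)$ as the data-processing gap $D\paren{\rho_{\A\B\E}\|\rho_\A\otimes\rho_{\B\E}} - D\paren{\rho_{\A\E}\|\rho_\A\otimes\rho_\E}$ for $\Tr_{\B}$ is a direct entropy computation; the $\rho_\A$ factors do cancel so that $\calT$ acts on $\E$ alone; and the chain $F\geq 2^{-I/2}$, Fuchs--van de Graaf $TD\leq\sqrt{1-F^2}$, and $1-2^{-x}\leq x\ln 2$ yields exactly the stated bound $\sqrt{\ln 2\cdot I(\A:\B|\E)}$. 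Two caveats. First, as you acknowledge, all of the analytic content sits in the deferred interpolation step; note also a subtlety your cartoon glosses over: the relative-entropy difference is not literally a boundary value of the interpolated functional --- it emerges as a limit/derivative (R\'enyi $\alpha\to 1$) of the three-line bound, and handling that limit is a real part of the work. Second, a minor imprecision: each rotated Petz map at \emph{fixed} $t$ is already completely positive and trace-preserving (trace preservation follows from $\Tr_{\B}[\rho_{\B\E}]=\rho_\E$), so averaging against $\beta_0$ is not what makes $\calT$ a channel; it is what lets a single, state-independent map satisfy the fidelity bound, with $\beta_0$ arising as the boundary measure in the interpolation formula.
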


\subsection{Notations in Boolean Function Analysis}

Any function $f:\{0,1\}^N\rightarrow\mathbb{R}$ has a unique expression as a
multilinear polynomial
\[
f(x)=\sum_{S\subseteq[N]}a_S\cdot x_S,
\]
where $x_S:=\Pi_{i\in S} x_i$, and the coefficient $a_S$ is given by $a_S=2^{-N}\sum_{x}f(x)\cdot x_S$. The
\emph{degree} of $f$, denoted $\deg(f)$, is defined as the degree of its
multilinear polynomial expression, i.e., $\max\{|S|: a_S\neq 0\}$. A monomial
$x_S$ is called \emph{maximum} if $a_S\neq 0$ and it has degree $\deg(f)$, i.e., $|S|=\deg(f)$. Two monomials $x_S$ and $x_T$ are called \emph{disjoint} if
$S\cap T=\emptyset$. We say that $f$ is \emph{not identically zero} if $f(x)\not\equiv0$.

A \emph{partial assignment} is a function $\mu:[N]\rightarrow\{0,1,\star\}$. We
define the support of $\mu$ as $\supp(\mu):=\{i|\mu(i)\neq \star\}$, and the
size as $|\mu|:=|\supp(\mu)|$. $\mu$ is called \emph{empty} if $|\mu|=0$. 
For
$x\in\{0,1\}^N$, we define the modification of $x$ with $\mu$, denoted by
$x^\mu$, as the string $x'\in\{0,1\}^N$ such that
\[
x'_i:=\begin{cases}
    \mu(i) & \text{if } i\in\supp(\mu),\\
    x_i & \text{otherwise}.
\end{cases}
\]
Given two partial assignments $\mu$ and $\eta$, define their \emph{product}, denoted by $\mu\cdot\eta$, to be the partial assignment satisfying that
$x^{\mu\cdot\eta}=(x^{\mu})^{\eta}$ for any $x\in\{0,1\}^N$. Note that the
product operator is associative but not commutative.
We say that two partial assignments $\mu$ and $\eta$ are \emph{disjoint} if
$\supp(\mu)\cap \supp(\eta)=\emptyset$. 


\begin{lemma}[\cite{BBCMW01}]\label{lem:poly_method}
    Suppose a quantum algorithm makes $d$ queries to a Boolean string\footnote{We interpret the string $x \in \{0,1\}^N$ as a function $x:[N] \to\{0,1\}$, and model queries to $x$ as oracle queries to this function.} $x\in\{0,1\}^N$,  and the acceptance probability is denoted by $f(x)$. Then the function $f:\{0,1\}^N\rightarrow \mathbb{R}$ has degree at most $2d$. That is, $f$ can be expressed as 
    \[
    f(x)=\sum_{|S|\leq 2d} a_{S}\cdot x_S.
    \]
\end{lemma}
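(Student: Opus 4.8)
The plan is to track the amplitudes of $\As$'s state as polynomials in the input bits $x_1,\dots,x_N$ and to show that each query raises their degree by at most one. It is convenient to work with the phase oracle $\ket{i,w}\mapsto x_i\ket{i,w}$, which the algorithm can implement from $U_H$ at no additional query cost by conjugating with a Hadamard on the answer register, and which yields the same acceptance probability. Writing the state after $t$ queries as $\ket{\psi_t}=\sum_{i,w}\alpha_{i,w}^{(t)}(x)\ket{i,w}$ in the notation of \Cref{def:query_weight}, I would prove by induction on $t$ that every amplitude $\alpha_{i,w}^{(t)}(x)$ is a polynomial in $x$ of degree at most $t$.

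The base case is immediate: the initial state is fixed and independent of $x$, so its amplitudes are constants of degree $0$. Each input-independent local unitary $U_t$ replaces the amplitudes by fixed linear combinations of themselves, with $x$-independent coefficients, and therefore does not increase the degree. The sole degree-increasing step is the query: the phase oracle multiplies $\alpha_{i,w}$ by $x_i$, raising the degree by at most one. (Equivalently, for the bit-flip oracle $\ket{i,b}\mapsto\ket{i,b\oplus H(i)}$ one checks, under the identification $x_i=(-1)^{H(i)}$, that each amplitude is sent to a linear combination with coefficients $\tfrac{1\pm x_i}{2}$, again of degree one in $x_i$.) Consequently, after all $d$ queries every final amplitude $\beta_z(x)$ is a polynomial of degree at most $d$.

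It then remains to pass from amplitudes to the acceptance probability $f(x)=\sum_{z\in\mathrm{Accept}}\abs{\beta_z(x)}^2$. Because $x$ is real, $\abs{\beta_z(x)}^2=\beta_z(x)\,\overline{\beta_z(x)}$, where $\overline{\beta_z}$ denotes the polynomial obtained by conjugating the complex coefficients of $\beta_z$ while keeping the same real monomials; this is a product of two polynomials each of degree at most $d$, hence a real polynomial of degree at most $2d$. Summing over the accepting outcomes preserves this bound, so $f$ has degree at most $2d$. Finally, reducing $f$ to its unique multilinear representation using $x_i^2=1$ can only lower the degree, which gives the claimed expansion $f(x)=\sum_{\abs{S}\le 2d}a_S\,x_S$.

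The one point requiring care is the magnitude-square in the last step: one must treat $\abs{\beta_z}^2$ as a polynomial in the \emph{real} variables $x$, so that multiplying $\beta_z$ by its conjugate exactly doubles the degree rather than producing something of higher degree. The degree bookkeeping across the local unitaries and the oracle is otherwise routine, and I expect no separate obstacle there.
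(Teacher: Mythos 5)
Your proof is correct: this lemma is quoted by the paper directly from \cite{BBCMW01} without a proof of its own, and your amplitude-induction argument (degree grows by at most one per query, then squaring doubles the degree, then multilinear reduction via $x_i^2=1$) is exactly the standard argument from that reference. The only imprecision is that Hadamard conjugation of $U_H$ gives the \emph{controlled} phase $\ket{i,b,w}\mapsto x_i^{b}\ket{i,b,w}$ rather than an unconditional phase, but your parenthetical direct analysis of the bit-flip oracle with coefficients $\tfrac{1\pm x_i}{2}$ already closes that loophole, so nothing is missing.
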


\begin{lemma}[\cite{mid04}] \label{lem:alon} Let
    $f:\{0,1\}^N\rightarrow\mathbb{R}$ be any function that is not identically zero, and $x_S$ be any maximum monomial of $f$. For any $x\in\{0,1\}^N$,
    there exists a $\mu$ with $\supp(\mu)=S$ such that $f(x^\mu)\neq 0$. 
\end{lemma}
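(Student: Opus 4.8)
The plan is to fix all coordinates outside $S$ to the values prescribed by $x$ and study the resulting restriction of $f$ to the coordinates in $S$. Concretely, let $g:\{-1,1\}^S\to\mathbb{R}$ be defined by $g(y)=f(z)$, where $z_i=y_i$ for $i\in S$ and $z_i=x_i$ for $i\notin S$. As $\mu$ ranges over all partial assignments with $\supp(\mu)=S$, the point $x^\mu$ ranges over exactly the inputs $z$ of this form, so the values $f(x^\mu)$ coincide with the values $g(y)$. Hence it suffices to show that $g$ is not identically zero: any $b\in\{-1,1\}^S$ with $g(b)\neq 0$ yields the desired $\mu$ by setting $\mu(i)=b_i$ for $i\in S$.

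To show $g\not\equiv 0$, I would examine its multilinear expansion and isolate the coefficient of the top monomial $y_S$. Starting from $f(x)=\sum_{T\subseteq[N]} a_T\, x_T$ and writing each $T$ as $T=(T\cap S)\cup(T\setminus S)$, the restriction substitutes the fixed values $x_i$ for $i\in T\setminus S$ and leaves $y_{T\cap S}$ free, giving
\[
g(y)=\sum_{T\subseteq[N]} a_T\Big(\prod_{i\in T\setminus S} x_i\Big)\, y_{T\cap S}.
\]
The monomial $y_S$ arises precisely from those $T$ with $T\cap S=S$, i.e.\ $S\subseteq T$. This is where the maximality of $x_S$ enters: since $|S|=\deg(f)$, every $T$ with $a_T\neq 0$ satisfies $|T|\leq|S|$, so $S\subseteq T$ forces $T=S$. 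Thus the only contribution to the coefficient of $y_S$ comes from $T=S$, and that coefficient equals $a_S$, which is nonzero because $x_S$ is a (maximum) monomial of $f$.

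Finally, by the uniqueness of the multilinear representation over $\{-1,1\}^S$ (equivalently, the orthonormality of the characters $y_S$ under the uniform measure), a multilinear polynomial with a nonzero coefficient cannot be the identically-zero function. Hence $g\not\equiv 0$, which completes the argument as laid out in the first paragraph. I do not anticipate a genuine obstacle here; the one step that must be handled with care is the coefficient computation, where it is essential to invoke $|S|=\deg(f)$ to rule out any contribution to the top monomial $y_S$ from the lower-degree terms of $f$.
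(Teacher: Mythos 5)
Your proof is correct and complete: restricting $f$ to the coordinates of $S$ by fixing the rest according to $x$, noting that $|S|=\deg(f)$ forces the coefficient of the top monomial $y_S$ in the restriction to be exactly $a_S\neq 0$, and invoking uniqueness of the multilinear expansion is a sound argument with no gaps. The paper itself gives no proof of this lemma (it is imported from \cite{alon99,mid04}), and your restriction-plus-top-coefficient argument is precisely the standard one underlying those references, so there is nothing to reconcile.
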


\subsection{Quantum Public-Key Encryption}

This section provides the formal definition of Quantum Public-Key Encryption (QPKE) and Quantum Key Agreement (QKA) in QROM. 

\begin{definition} \label{def:PKE} Let $\lambda \in \mathbb{Z}_+$ be the
    security parameter and $H\colon [2^{n_\lambda}]\to\{0,1\}$ be a random
    oracle. A quantum public-key encryption scheme, relative to $H$, consists of
    the following three bounded-query quantum algorithms:
    \begin{itemize}
        \item $\gen^{H}(1^\lambda)\to(\pk,\sk)$: the key generation algorithm
        that generates a pair of classical public key $\pk$ and classical secret
        key $\sk$.
        \item $\enc^{H}(\pk,m)\to \ct$: the encryption algorithm that takes as input the
        public key $\pk$ and the plaintext $m$, and produces a classical ciphertext $\ct$. 
        \item $\dec^{H}(\sk,\ct)\to m'$: the decryption algorithm that takes as input
        the secret key $\sk$ and the ciphertext $\ct$, and outputs the plaintext $m'$. 
    \end{itemize}
    The algorithms need to satisfy the following requirements:
    \begin{description}
        \item[Perfect Completeness]
        $\Pr\left[\dec^{H}\left(\sk,\enc^{H}(\pk,m)\right)=m\colon
        \gen^H(1^\lambda)\to(\pk,\sk)\right]=1$.
        \item[IND-CPA Security] For any adversary $\mathcal{E}^{H}$ that
        makes $\poly(\lambda)$ quantum queries, for every two plaintexts 
        $m_0\neq m_1$ chosen by $\Es^{H}$ after seeing $\pk$, we have
        \begin{align*}
            \Pr_{b\leftarrow\{0,1\}}\left[\mathcal{E}^{H}\left(\pk,\enc^{H}(\pk,m_b)\right)=b\right]\leq \frac{1}{2}+\negl(\lambda).
        \end{align*}
    \end{description}    
\end{definition}


For simplicity, we use ``QPKE'' to refer to quantum public-key encryption
schemes with classical secret key, public key, and ciphertext, unless specified
otherwise. Besides, we will also consider QPKE schemes with quantum public keys, defined as follows. 

\begin{definition}[QPKE with quantum public key]\label{def:PKE_qpk}
Let $\lambda \in \mathbb{Z}_+$ be the security parameter and $H\colon
[2^{n_\lambda}]\to\{0,1\}$ be a random oracle. A quantum public-key encryption
scheme with quantum public key, relative to $H$, consists of the following four
bounded-query quantum algorithms:
\begin{itemize}
    \item $\skgen^{{H}}(1^\lambda)\to\sk$: the secret key generation algorithm
    that generates a classical secret key $\sk$. 
    \item $\pkgen^{{H}}(\sk)\to\rho_{\pk}$: the public key generation algorithm
    that takes the secret key $\sk$ and generates a quantum state
    $\rho_{\pk}$ as the public key.
    \item $\enc^{H}(\rho_{\pk},m)\to \ct$: the quantum encryption
    algorithm that takes the public key $\rho_{\pk}$ and the plaintext $m$, and produces a
    classical or quantum ciphertext $\rho_{\ct}$. 
    \item $\dec^{H}(\sk,\rho_{\ct})\to m'$: the quantum decryption
    algorithm that takes the secret key $\sk$ and the ciphertext $\rho_{\ct}$, and outputs the
    plaintext $m'$. 
\end{itemize}
The algorithms need to satisfy the following requirements:
\begin{description}
    \item[Perfect Completeness] $$\Pr\left[\dec^{H}\left(\sk,\enc^{H}(\rho_{\pk},m)\right)=m\colon \skgen^{{H}}(1^\lambda)\to\sk,\pkgen^H(\sk)\to\rho_{\pk}\right]=1.$$
    \item[IND-CPA Security] For any adversary $\mathcal{E}^{H}$ that receives $\poly(\lambda)$ copies of the public key and can make $\poly(\lambda)$ queries, and every two plaintexts $ m_0\neq m_1$ chosen by the adversary, we have
    \begin{align*}
        \Pr_{b\leftarrow\{0,1\}}\left[\mathcal{E}^{H}\left(\rho_{\pk}^{\otimes \poly(\lambda)},\enc^{H}(\rho_{\pk},m_b)\right)=b\right]\leq \frac{1}{2}+\negl(\lambda).
    \end{align*}
    
\end{description}
\end{definition}

In this paper, we focus on the setting where the quantum public key is uniquely determined by the secret key $\sk$; that is, the quantum algorithm $\pkgen(\sk)$ makes no queries to the oracle $H$. 
This setting covers all possible QPKE schemes with classical public keys, as we may assume, without loss of generality, that $\sk$ contains a copy of $\pk$. Furthermore, we may assume that $\rho_{\pk}$ is a pure state, since $\sk$ can be taken to include a purification of $\rho_{\pk}$\footnote{Assume $\pkgen(\sk)$ first generates $\sum_{x}\alpha_x\ket{x}\ket{\pk_x}$, and then discards $\ket{x}$ to obtain $\rho_\pk:=\sum_{x}|\alpha_x|^2\ket{\pk_x}\bra{\pk_x}$. Without affecting the scheme's functionality, we can include a copy of $x$ in $\sk$ so that the public key becomes a pure state $\ket{\pk_x}$. }.


Lastly, we define Quantum Key Agreement in QROM.
\begin{definition}[Quantum Key Agreement in the Oracle Model] 
    Let $\lambda\in\mathbb{Z}_+$ be the security parameter and let
    $H:[2^{n_\lambda}]\to\{0,1\}$ be a random oracle. A Quantum Key Agreement (QKA)
    protocol involves two parties, Alice and Bob, who initially begin with
    all-zero states. They can perform any quantum operations, make
    $\poly(\lambda)$ quantum queries to the oracle $H$, and exchange classical
    messages. At the end of the protocol, Alice and Bob output classical strings
    $k_A$ and $k_B$, respectively. 
    
    The protocol needs to satisfy the following conditions: 
    \begin{description}
        \item[Correctness] $\Pr[k_A = k_B] \geq 1/\poly(\lambda)$, where the
        probability is taken over the randomness of Alice and Bob's channels, and
        the random oracle $H$.  
        \item[Security] For any eavesdropper Eve that makes  $\poly(\lambda)$
        quantum queries to $H$, eavesdrops on classical communication between Alice and
        Bob and outputs $k_E$, we have $\Pr[k_A = k_E]=\negl(\lambda)$.
    \end{description}
\end{definition}

Similar to the perfect completeness in QPKE, a QKA protocol is said to be \emph{perfect complete} if it satisfies $\Pr[k_A=k_B]=1$.

\section{Helper Lemmas}
This section presents some helper lemmas, which may be of independent interest. 
\subsection{Information-Theoretic Tools}\label{sec:helper_lemmas1}

The following two information-theoretic lemmas from \cite{own} will be used to
prove our main results. We provide their proofs in
\Cref{sec:proofs} to make this paper self-contained. We assume the quantum algorithm always defers measurements to the end without loss of generality.

Lemma \ref{lem:entropyB} upper bounds how much entropy a quantum algorithm can accumulate through oracle queries.

\begin{lemma}[\cite{own}] \label{lem:entropyB} 
    Consider an algorithm $\calA$ that starts with a pure state, and makes $d$ quantum queries to a random
    oracle $H:[2^{n}]\to\{0,1\}$ without intermediate measurements. Let $\A$ denote the whole register of $\calA$ and let
    $\rho$ be the quantum state right before the final measurement. Then, it holds
    that $S(\A)_\rho\leq 2d(n+1)$. 
\end{lemma}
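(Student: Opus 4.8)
The plan is to purify the random oracle and then track the entanglement entropy across the algorithm--oracle cut, query by query. Concretely, I would replace the uniformly random classical oracle $H$ by an oracle register $\mathsf{O}$ initialized to the uniform superposition $\frac{1}{\sqrt{2^{2^{n}}}}\sum_H \ket{H}_{\mathsf{O}}$ over all truth tables, and replace each classical query by the controlled unitary $\ket{i,b}_\A\ket{H}_{\mathsf{O}} \mapsto \ket{i, b\oplus H(i)}_\A\ket{H}_{\mathsf{O}}$. A routine check shows that the global state after all gates is $\frac{1}{\sqrt{2^{2^{n}}}}\sum_H \ket{\psi_H}_\A\ket{H}_{\mathsf{O}}$, where $\ket{\psi_H}$ is the pure state $\calA$ produces on oracle $H$; tracing out $\mathsf{O}$ then reproduces exactly the mixture $\rho = \frac{1}{2^{2^n}}\sum_H \ket{\psi_H}\bra{\psi_H}$, so the reduced state on $\A$ is unchanged. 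Since $\calA$ starts pure and $\mathsf{O}$ starts pure, the joint state $\ket{\Psi}_{\A\mathsf{O}}$ is pure throughout, and hence $S(\A)_\rho = S(\mathsf{O})$. The whole problem thus reduces to bounding the entropy that the oracle register accumulates.

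Next I would bound the growth of $S(\mathsf{O})$ one gate at a time. The local unitaries $U_1,\dots,U_{d+1}$ act only on $\A$, leaving the reduced state on $\mathsf{O}$ untouched, so they contribute nothing. The only steps that change $S(\mathsf{O})$ are the $d$ query unitaries. The key observation is that each query couples $\mathsf{O}$ only to the fixed $(n+1)$-qubit query register $Q$ of $\A$ (the $n$-qubit input $i$ together with the $1$-qubit output $b$), acting trivially on the remaining register $R$ of $\A$. Writing the global pure state on $QR\mathsf{O}$ and using purity together with subadditivity and the Araki--Lieb inequality, I get $|S(QR) - S(R)| \le S(Q) \le n+1$. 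Since the query is the identity on $R$, the reduced state on $R$, and hence $S(R)$, is invariant across a query. Comparing this bound immediately before and immediately after a single query, and using $S(\mathsf{O}) = S(QR)$, yields $|S(\mathsf{O})_{\text{after}} - S(\mathsf{O})_{\text{before}}| \le 2(n+1)$.

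Putting these pieces together, $S(\mathsf{O})$ starts at $0$ and increases by at most $2(n+1)$ per query, so after $d$ queries $S(\A)_\rho = S(\mathsf{O}) \le 2d(n+1)$, which is the claimed bound.

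The step I expect to be the main obstacle, or at least the one needing the right idea, is recognizing that one must \emph{not} bound $S(\mathsf{O})$ by $\log\dim\mathsf{O} = 2^{n}$ (which is vacuous), but instead exploit that only an $(n+1)$-qubit register is coupled to $\mathsf{O}$ at each query. Once this per-query locality is isolated, the bookkeeping via subadditivity/Araki--Lieb and the invariance of $S(R)$ is routine; the delicate part is setting up the purified-oracle picture so that $S(\A)_\rho$ is faithfully captured by $S(\mathsf{O})$ and so that each query genuinely touches only the fixed $(n+1)$-qubit query register.
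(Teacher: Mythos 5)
Your proposal is correct and is essentially the paper's own argument: the paper realizes each query as a round-trip transmission of the $(n+1)$-qubit query register to an oracle party and invokes subadditivity to bound the entropy increase of $\calA$'s register by $2(n+1)$ per query, which is exactly your inequality $\left|S(QR)-S(R)\right|\leq S(Q)\leq n+1$ applied before and after the query together with the invariance of $S(R)$. The only cosmetic difference is that you make the bookkeeping fully explicit by purifying the random oracle and tracking $S(\mathsf{O})=S(\A)$, whereas the paper leaves the oracle's randomness implicit in the communication-protocol picture; the quantitative core is identical.
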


Lemma \ref{lemma:permuation-invariance} claims that repetition decreases CMI.
\begin{definition}[Permutation Invariance]
    Let $\A, \B_1,\B_2,\cdots,\B_t$ be a $(t+1)$-partite quantum system.
    Given the joint state $\rho_{\A\B_1\B_2\cdots \B_t}$, we say that $\B_1,
    \dots, \B_t$ are permutation invariant if, for any permutation $\pi$ on
    $[t]$, it holds that  
    \[
    \rho_{\A\B_1\B_2\cdots \B_t} = \rho_{\A\B_{\pi(1)}\B_{\pi(2)}\cdots \B_{\pi(t)}}.
    \]
\end{definition}

\begin{lemma}[Lemma 4.2, \cite{own}]\label{lemma:permuation-invariance} 
Let $\A,\B_1, \B_2, \dots, \B_t, \C$ be a $(t+2)$-partite quantum
system. Suppose the joint state $\rho_{\A\C\B_1\B_2\cdots \B_t}$ is fully
separable, i.e., it can be written as $\sum_{i} p_i\rho_{A}^{(i)}\otimes\rho_{C}^{(i)}\otimes\rho_{B_1}^{(i)}\otimes\cdots\otimes\rho_{B_t}^{(i)}$. If $\B_1, \B_2, \dots, \B_t$ are permutation invariant, then there
exists some $0 \leq j \leq t-1$ such that\footnote{We remark that $j$ cannot be simply set as $t-1$, since conditioning on more registers may increase CMI.}
\[
I(\B_t: \A \mid \C, \B_1, \dots, \B_j)_{\rho} \leq \frac{S(\A)}{t}.
\]
\end{lemma}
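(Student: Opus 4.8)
The plan is to combine a chain-rule decomposition of the conditional mutual information with the permutation-invariance hypothesis, and then to bound the total quantity by $S(\A)$ using separability. Concretely, I would first invoke the chain rule for quantum conditional mutual information to write
\[
I(\B_1\B_2\cdots\B_t:\A\mid\C)=\sum_{j=0}^{t-1} I(\B_{j+1}:\A\mid \C,\B_1,\dots,\B_j).
\]
Each summand is nonnegative by strong subadditivity (\Cref{lem:strong_subadditivity}). The next observation is that permutation invariance lets me replace $\B_{j+1}$ by $\B_t$ in every summand: the transposition swapping the registers $\B_{j+1}$ and $\B_t$ fixes $\A$, $\C$, and $\B_1,\dots,\B_j$, and it leaves $\rho$ invariant, so $I(\B_{j+1}:\A\mid \C,\B_1,\dots,\B_j)=I(\B_t:\A\mid \C,\B_1,\dots,\B_j)$. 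Hence the sum equals $\sum_{j=0}^{t-1} I(\B_t:\A\mid \C,\B_1,\dots,\B_j)$, a sum of $t$ nonnegative terms.

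Second, I would upper bound the left-hand side by $S(\A)$. Writing $I(\B_1\cdots\B_t:\A\mid\C)=S(\A\mid\C)-S(\A\mid \C,\B_1,\dots,\B_t)$, subadditivity gives $S(\A\mid\C)\le S(\A)$, so it suffices to show the second term is nonnegative. This is exactly where separability enters: a state that is separable across the bipartition $\A \mid \C\B_1\cdots\B_t$ has nonnegative conditional entropy, i.e. $S(\A\mid \C,\B_1,\dots,\B_t)\ge 0$. (Note that we only use separability across $\A$ versus the remaining registers, which the full-separability hypothesis certainly supplies.) Combining the two estimates yields $I(\B_1\cdots\B_t:\A\mid\C)\le S(\A)$.

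Finally, since the sum of $t$ nonnegative terms $\sum_{j=0}^{t-1} I(\B_t:\A\mid \C,\B_1,\dots,\B_j)$ is at most $S(\A)$, an averaging argument produces at least one index $j\in\{0,\dots,t-1\}$ with $I(\B_t:\A\mid \C,\B_1,\dots,\B_j)\le S(\A)/t$, which is the claim.

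I expect the main obstacle to be the nonnegativity of $S(\A\mid \C,\B_1,\dots,\B_t)$. Unlike the chain rule and the permutation argument, which are purely formal, this step genuinely requires separability: quantum conditional entropy can be negative in general (precisely for entangled states), and the weaker bound $S(\A\mid\cdot)\ge -S(\A)$ available from Araki--Lieb would only yield $I(\B_1\cdots\B_t:\A\mid\C)\le 2S(\A)$ and hence a bound of $2S(\A)/t$ — the separability hypothesis is exactly what removes this factor of $2$. I would justify nonnegativity by the standard fact that separable states are ``more disordered globally than locally,'' i.e. that separable $\rho_{\A X}$ satisfies $S(\A X)\ge S(X)$; applied with $X=\C\B_1\cdots\B_t$ this gives $S(\A\mid \C,\B_1,\dots,\B_t)\ge 0$. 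For a self-contained version I would adjoin a classical register $K$ recording the decomposition index of the separable state, note that conditioned on $K$ the state is a product across the bipartition, and invoke majorization of the global spectrum by the local spectrum together with Schur concavity of the von Neumann entropy.
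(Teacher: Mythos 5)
Your proposal is correct and takes essentially the same route as the paper's own proof: the chain rule $I(\B_1\cdots\B_t:\A\mid\C)=\sum_{j} I(\B_{j+1}:\A\mid\C,\B_1,\dots,\B_j)$, the bound $I(\B_1\cdots\B_t:\A\mid\C)\le S(\A)$ obtained from nonnegativity of conditional entropy for separable states together with $S(\A\mid\C)\le S(\A)$, permutation invariance to replace $\B_{j+1}$ by $\B_t$, and an averaging argument. The only (immaterial) difference is ordering: you permute every summand and then average, while the paper averages first and permutes only the selected index.
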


\subsection{A Structural Property of Low-Degree Polynomials}
In the proof of our main results, a key step involves explicitly reprogramming an unknown $x\in\{0,1\}^N$ (representing the oracle) by modifying at most $\poly(d)$ bits, in order to make a given degree-$d$ polynomial $f$ (representing the probability that $\gen$ or $\skgen$ outputs a particular $\sk$) evaluate to non-zero.

The following lemma, which builds heavily on \Cref{lem:alon}, establishes a win-win situation: by obliviously modifying a small number of bits of the unknown $x$, denoted by a partial assignment $\mu$, either we can already guarantee that $f(x^\mu)\neq0$, or there must exist many \emph{disjoint} (albeit unknown) partial assignments to make $f$ evaluate to non-zero.


\begin{lemma} \label{lem:reprogram}    
    Let $m>0$ be an integer. For any degree-$d$ function
    $f:\{0,1\}^N\rightarrow\mathbb{R}$ that is not identically zero, we can explicitly construct a partial assignment $\mu$
    of $|\mu|\leq md^2$ such that: either
    \begin{enumerate}
    \item[(a)] for any $x\in\{0,1\}^N$, $f(x^\mu)\not=0$; or
    \item[(b)] for any $x\in\{0,1\}^N$, there must exist $m$ pairwise disjoint partial assignments
    $\mu_1,\ldots, \mu_{m}$ of size at most $d$ such that $f(x^{\mu_\ell\cdot\mu})\neq
    0$ for all $\ell\in[m]$.
    \end{enumerate}
\end{lemma}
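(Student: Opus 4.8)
The plan is to build $\mu$ by a recursive degree-reduction procedure driven by a matching-versus-cover dichotomy on the maximum monomials of $f$, with \Cref{lem:alon} as the atomic tool. At each level of the recursion the current object is a polynomial $g$ that is not identically zero (starting from $g=f$), and I would maintain the invariant that $g = f|_\nu$, the restriction of $f$ by the partial assignment $\nu$ accumulated so far; all choices below are made deterministically from the coefficients of $f$, so that $\mu$ is explicitly constructed.

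At a given level I would list the maximum monomials of $g$ (degree exactly $\deg(g)$, nonzero coefficient) and greedily extract a \emph{maximal} pairwise-disjoint family $x_{S_1},\dots,x_{S_r}$. The dichotomy is: if $r\ge m$, stop and output the accumulated $\mu:=\nu$. Here \Cref{lem:alon} applied to each $x_{S_\ell}$ for $\ell\in[m]$ gives, for every $x$, a partial assignment supported exactly on $S_\ell$ that makes $g$ nonzero at the relevant point; since the $S_\ell$ are pairwise disjoint with $|S_\ell|=\deg(g)\le d$, these become the $m$ disjoint assignments $\mu_1,\dots,\mu_m$ of case (b). If instead $r<m$, put $T:=S_1\cup\cdots\cup S_r$, so $|T|\le (m-1)\deg(g)$. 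By maximality, every maximum monomial of $g$ meets $T$, so substituting \emph{any} constants on $T$ strictly lowers the degree; and since $g\not\equiv 0$, I can choose the assignment $\nu_T$ on $T$ keeping $g|_{\nu_T}\not\equiv 0$ (take the $T$-coordinates of any point where $g$ is nonzero), then recurse on $g|_{\nu_T}$, of degree at most $\deg(g)-1$. If the recursion reaches degree $0$ the fully restricted polynomial is a nonzero constant, and then $f(x^\mu)\neq 0$ for all $x$, giving case (a).

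Two bookkeeping points finish the argument. First, the size bound: a level of current degree $d'$ (entered only when $r<m$) contributes at most $(m-1)d'$ coordinates, and the degrees along the recursion are strictly decreasing, so $|\mu|\le (m-1)\sum_{k=1}^{d}k = (m-1)\tfrac{d(d+1)}{2}\le m d^2$, using $m-1\le d(m+1)$ for $d\ge 1$. Second, the translation between $g=f|_\mu$ and $f$: because each coordinate set by a $\mu_\ell$ lies in $S_\ell\subseteq[N]\setminus\supp(\mu)$, the assignments $\mu_\ell$ and $\mu$ are disjoint, so $x^{\mu_\ell\cdot\mu}$ reprograms exactly $\mu$ on $\supp(\mu)$ and $\mu_\ell$ on $S_\ell$; hence $f(x^{\mu_\ell\cdot\mu}) = (f|_\mu)(x'^{\mu_\ell})$ for $x'$ the restriction of $x$, which is precisely the quantity \Cref{lem:alon} forces to be nonzero.

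The step I expect to be the main obstacle is the cover half of the dichotomy: verifying that a \emph{maximal} disjoint family of maximum monomials produces a set $T$ that hits \emph{all} maximum monomials, so that fixing $T$ genuinely drops the degree while I retain the freedom to keep $g|_{\nu_T}\not\equiv 0$. The other delicate point is the composition order of partial assignments, so that disjointness makes $\mu_\ell\cdot\mu$ act as the union of the two reprogrammings; this reduces to the associativity and disjointness bookkeeping recalled in the preliminaries, but must be tracked carefully across recursion levels.
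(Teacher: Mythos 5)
Your proposal is correct and follows essentially the same approach as the paper's proof: the same dichotomy between a maximal disjoint family of maximum monomials of size $\geq m$ (yielding case (b) via \Cref{lem:alon}) and a small family whose variables are fixed to keep the restriction not identically zero, strictly dropping the degree and recursing, with the same size accounting. The only differences are cosmetic — the paper phrases the recursion as an iterative algorithm and fixes the variables one at a time rather than all at once from a nonzero point.
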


\begin{proof}
We propose an algorithm to construct such a partial assignment $\mu$. The algorithm maintains a function $\tilde{f}$ and a partial assignment $\tilde{\mu}$. Initially, $\tilde{f}=f$ and $\tilde{\mu}$ is empty. The algorithm contains at most $\deg(f)$ rounds: in each but the last round, we extend $\tilde{\mu}$ by fixing at most $md$ additional bits, and reduce the degree of $\tilde{f}$ by at least 1. Specifically, in each round, the algorithm first constructs a maximal set $\Ss$ of disjoint maximum monomials of $\tilde{f}$ (so any maximum monomial of $\tilde{f}$ intersects with at least one monomial in $\Ss$); Then
\begin{enumerate}
    \item If $|\Ss|>m$, then stop and return $\tilde{\mu}$;
    \item Otherwise, fix all variables appearing in $\mathcal{S}$ while keeping the
    new $\tilde{f}$ not identically zero. To do this, process each variable $x_j$ in
    $\mathcal{S}$ one by one. For each, choose a value $b \in \{0,1\}$ such
    that $\tilde{f}$ remains not identically zero after setting $x_j$ as $b$. Such a choice
    always exists since $\tilde{f}$ is not identically zero. Let $\eta$ be the resulting
    partial assignment, and update $\tilde{\mu}$ as $\tilde{\mu} \cdot \eta$ and
    $\tilde{f}$ as $\tilde{f}^\eta$. Here, the function $\tilde{f}^{\eta}(x)$ is defined as $\tilde{f}(x^\eta)$. Now, if $\deg(\tilde{f}) = 0$, then stop and
    return $\tilde{\mu}$.
\end{enumerate}

We now analyze the algorithm. First, we claim that the final $\tilde{\mu}$ satisfies either condition (a) or (b). This is because that: 
\begin{itemize}
    \item If the algorithm stops because
    $\deg(\tilde{f}) = 0$, then $\tilde{f}$ is a constant function that is not zero, say
    $\tilde{f}(x)\equiv c \neq 0$. Therefore, for any $x$, we have
    $f(x^{\tilde{\mu}}) = \tilde{f}(x) = c \neq 0$,
    and condition (a) is satisfied.
    \item If the algorithm stops because $|\mathcal{S}| > m$, then given any $x$, for each maximum
    monomial $\ell$ of $\tilde{f}$ in $\mathcal{S}$, by \Cref{lem:alon}, there exists a partial assignment
    $\mu_\ell$ on the $\leq d$ variables of $\ell$ such that $ f(x^{\mu_\ell \cdot \tilde{\mu}})=\tilde{f}(x^{\mu_\ell})
    \neq 0$. Recalling that the monomials in $\mathcal{S}$ are
    disjoint, we conclude that condition (b) is satisfied. 
\end{itemize}
Next, we show that $|\mu|\leq md^2$, and therefore finish the proof. Since any maximum monomial of $\tilde{f}$ intersects with at least one monomial in $\Ss$, fixing all variables in $\Ss$ reduces $\deg(\tilde{f})$ by at least 1. Hence, the number of round is at most $d$. Moreover, in each round, $|\mu|$ increases by at most $md$. So we have $|\mu|\leq d\cdot (md)=md^2$.
\end{proof}

\section{Impossibility of Perfect-Complete Quantum PKE}



This section will prove that perfect-complete QPKE schemes do not exist
in QROM. More formally, we have the following theorem.
\begin{theorem}[Restate of \Cref{thm:thm1}] \label{thm:qpke}
    For any perfect-complete QPKE in QROM, which makes $d$ queries to a random
    oracle $H:[2^n]\rightarrow\{0,1\}$ during each of $\gen,\enc$ and $\dec$, there exists an adversary Eve that can break the scheme
    w.p.~$1-O(\epsilon)$ by making $O\left( d^7 \log(d/\epsilon) /
    \epsilon^4+nd^2/\epsilon^2\right)$ queries to $H$. 
\end{theorem}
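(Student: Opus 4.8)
The plan is to turn any perfect-complete QPKE into a two-round key agreement and construct a query-bounded eavesdropper that recovers Alice's message. Picture Alice running $\gen^H$ to produce $(\pk,\sk)$, broadcasting $\pk$, and Bob running $\enc^H(\pk,m)$ to produce the challenge $\ct$; by perfect completeness $\dec^H(\sk,\ct)=m$. I would model the purified joint state on registers $\A$ (Alice's internal register, holding $\sk$), $\E$ (the public transcript $\pk$), and $t$ independent copies $\B_1,\dots,\B_t$ of Bob's register, where each $\B_i$ encrypts a fresh random message, so that $\B_1,\dots,\B_t$ are permutation invariant given $\A,\E$. First, apply \Cref{lem:entropyB} to bound $S(\A)\le 2d(n+1)$, then \Cref{lemma:permuation-invariance} to find an index $j$ with $I(\B_{j+1}:\A\mid \E,\B_1,\dots,\B_j)\le 2d(n+1)/t$. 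Choosing $t=\Theta(dn/\eps^2)$ drives this conditional mutual information below $\eps^2/\ln 2$.

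Next, \Cref{thm:quantum_mutual_info_operational} furnishes a recovery channel acting on $\E,\B_1,\dots,\B_j$ that reconstructs a register $\A'$ for which the joint state on $\A',\B_{j+1}$ is $O(\eps)$-close in trace distance to the true state on $\A,\B_{j+1}$. Crucially, $\E$ and $\B_1,\dots,\B_j$ are all producible by Eve herself: she reads the public $\pk$ and encrypts $j$ dummy messages by running $\enc^H$, costing $jd\le td=O(nd^2/\eps^2)$ queries --- this accounts for the second term of the query bound. If Eve could faithfully reconstruct $\A'\ni\sk$ and run $\dec$ on the real challenge $\B_{j+1}=\ct$, perfect completeness together with the $O(\eps)$ trace-distance guarantee would yield the correct message with probability $1-O(\eps)$.

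The main obstacle is that the Fawzi--Renner channel is non-constructive, so Eve must implement the reconstruction explicitly with bounded queries, via the ``simulate Alice on a reprogrammed oracle'' paradigm. Let $f(x)$ denote the probability that $\gen$ outputs the observed $\pk$ as a function of the oracle truth table $x$; by \Cref{lem:poly_method}, $\deg(f)\le 2d$, and $f$ is not identically zero since the real oracle makes it positive. Apply \Cref{lem:reprogram} with $m=\Theta(d^2/\eps^2)$ to explicitly construct a partial assignment $\mu$ of size $O(md^2)=O(d^4/\eps^2)$. In case (a), $f(x^\mu)\ne 0$ for every $x$, so reprogramming the oracle on $\supp(\mu)$ keeps the reconstructed generation consistent with $\pk$, and Eve simulates Alice on $x^\mu$ to recover $\sk$ and decrypt. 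The delicate case is (b): although we know $m$ pairwise disjoint assignments $\mu_1,\dots,\mu_m$ (each of size $\le 2d$) satisfy $f(x^{\mu_\ell\cdot\mu})\ne 0$, Eve knows no particular $\mu_\ell$. I would resolve this by averaging: since $\dec$ has total query weight at most $d$ spread over the disjoint supports $\supp(\mu_\ell)$, some $\mu_\ell$ carries query weight at most $d/m=O(\eps^2/d)$, so by \Cref{lem:BBBV} running $\dec$ on $x^\mu$ rather than on the ``correct'' $x^{\mu_\ell\cdot\mu}$ perturbs its output by only $2\sqrt{d}\cdot\sqrt{d/m}=O(\eps)$. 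Thus simulating decryption on $x^\mu$ succeeds up to $O(\eps)$ error even without knowing $\mu_\ell$; and because $\supp(\mu)$ is small while $\enc$ has bounded query weight, \Cref{lem:BBBV} also shows reprogramming only negligibly disturbs the challenge register $\B_{j+1}$.

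Combining the two error sources --- the $O(\eps)$ from the approximate Markov reconstruction and the $O(\eps)$ from the \Cref{lem:BBBV} perturbations --- Eve recovers the message and breaks IND-CPA with probability $1-O(\eps)$. The reconstruction-and-decryption phase, including realizing the recovery on the reprogrammed oracle of support size $O(d^4/\eps^2)$ with a logarithmic overhead to amplify the success probability, contributes the first term $O(d^7\log(d/\eps)/\eps^4)$, while the dummy-ciphertext generation contributes $O(nd^2/\eps^2)$. I expect the hardest part to be making the reprogramming-based reconstruction rigorous: reconciling the abstract recovery channel with an explicit, query-bounded simulation of Alice whose validity rests exactly on the win--win dichotomy of \Cref{lem:reprogram}, while simultaneously guaranteeing that reprogramming preserves both the consistency of Alice's generation and the integrity of Bob's challenge ciphertext.
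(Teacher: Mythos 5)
Your proposal follows the paper's high-level strategy (the two-round KA reduction, CMI reduction via repeated copies of Bob's register plus the Fawzi--Renner reconstruction, and the win--win reprogramming dichotomy of \Cref{lem:reprogram}), but it omits the step that makes the reprogramming argument sound: Eve must first \emph{identify Bob's heavy queries} and keep them fixed. In the paper, Eve's Step 1 builds a classical record $R_E$ by simulating $\Bs^H(m_0)$ up to a random query and measuring, repeated $O(d^6\log(d/\eps)/\eps^4)$ times; with probability $1-\eps$ this record contains every input whose query weight under Bob is at least $\eps^4/d^5$ (\Cref{lem:wb}). The polynomial is then defined as $f(x)=g(x^{R_E})$, precisely so that the partial assignment $\mu$ from \Cref{lem:reprogram} can be taken disjoint from $\supp(R_E)$, hence disjoint from all of Bob's heavy queries. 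Only then does \Cref{lem:BBBV} give a useful bound: every reprogrammed position has Bob-query-weight below $\eps^4/d^5$, so the disturbance is at most $2\sqrt{d}\sqrt{(\eps^4/d^5)\cdot(d^4/\eps^2)}=O(\eps)$.

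Your justification at the corresponding point --- that ``because $\supp(\mu)$ is small while $\enc$ has bounded query weight, \Cref{lem:BBBV} also shows reprogramming only negligibly disturbs the challenge register'' --- is false as stated. Indeed $|\supp(\mu)|=O(d^4/\eps^2)$ while Bob's total query weight is only $d$, but nothing prevents Bob from concentrating essentially all of that weight on exactly the positions in $\supp(\mu)$ (both $\gen$ and $\enc$ may query overlapping, $\pk$-dependent positions); then \Cref{lem:BBBV} gives only $2\sqrt{d}\cdot\sqrt{d}=O(d)$, which is vacuous. The same issue recurs in your Case (b): to conclude that decryption under $x^{\mu_\ell\cdot\mu}$ outputs $k_B$, you need Bob's output $(k_B,m_1)$ to be compatible with that oracle, which again requires the modified positions to avoid Bob's heavy queries. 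Your query accounting also reveals the omission: the attack you describe costs only $O(nd^2/\eps^2)+O(d)$ queries, and your attribution of the $O(d^7\log(d/\eps)/\eps^4)$ term to ``realizing the recovery channel with logarithmic amplification'' has no referent --- in the paper that term is exactly the cost of the heavy-query learning step. A smaller imprecision: $f$ is not identically zero not because ``the real oracle makes it positive'' (Eve holds a \emph{fake} $\sk'$, which may well have zero probability under the real $H$), but because the reconstructed view lies in the support of the true view distribution with probability $1-O(\eps)$, which is what \Cref{lem:view} establishes and which also guarantees consistency with $R_E$.
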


The remainder of this section presents the proof of Theorem \ref{thm:qpke}. It is well-known that: given a perfect-complete QPKE scheme $(\gen^H, \enc^H, \dec^H)$, one can construct a perfect-complete two-round key agreement protocol between two parties, Alice and Bob, as follows.
\begin{enumerate}
    \item Alice computes $(\pk,\sk)\gets \gen^{H}(1^\lambda)$ and sends
    $m_0:=\pk$ to Bob. Denote this stage by $\calA_0$.
    \item Bob randomly chooses $k_B\in\{0,1\}$, computes
    $\ct\gets\enc^{H}(m_0,k_B)$, sends $m_1:=\ct$ to Alice and outputs $k_B$. Denote
    this stage by $\calB$.
    \item Alice computes $k_A\gets \dec^{H}(\sk,m_1)$ and outputs $k_A$. Denote 
    this stage by $\calA_1$.
\end{enumerate}
Each stage of the key agreement makes at most $d$ queries. Thus breaking this QKA also breaks QPKE. 
We now
construct an eavesdropper Eve that sees $(m_0, m_1)$ and guesses the agreed key
w.p.~$1-O(\epsilon)$ by making $O(\poly(n,d,1/\epsilon))$ queries to $H$. Eve's attack algorithm consists of three steps, as summarized in \Cref{fig:algo}. The rest of the section is a detailed analysis of Eve's attack.

\begin{figure}[!t]
    \centering
\begin{mdframed}
\textbf{Algorithm (Eve’s Attack).}
\par\textbf{Input:} random oracle $H:[N]\to\{0,1\}$ with $N=2^n$, transcripts $(m_0,m_1)$, parameters $d,\epsilon$.
\par\textbf{Output:} key $k_E$ such that $\Pr[k_E=k_A]=1-O(\epsilon)$.
\begin{enumerate}
  \item \textbf{Step 1: Identify $\mathcal{B}$’s heavy queries.}
  
  Initialize a classical query record $R_E\gets\emptyset$, and repeats the following
  $\tfrac{d^6}{\epsilon^4}\log\tfrac{d^6}{\epsilon^5}$ times:
  \begin{enumerate}
    \item Sample $t\leftarrow[d]$, simulate $\mathcal{B}^H(m_0)$ up to its $t$-th oracle query,
          measure the query register to obtain $i\in[2^n]$.
    \item Query $i$ to $H$ classically and add $(i,H(i))$ to $R_E$.
  \end{enumerate}
  
  \item \textbf{Step 2: Sample a fake secret key.}
  
  Let $\A$ denote the registers of $\calA_0$, and $\B$ denote the registers of $\calB$ right before $\calB$ performs the final measurement.
  \begin{enumerate}
  \item Run $\mathcal{B}^H(m_0)$ repeatedly for $4dn/\epsilon^2$ times, stopping just before the final measurement, to obtain
  copies $\B_1,\ldots,\B_{4dn/\epsilon^2}$ of $\B$.
   \item Let $j$ be the index\footnote{Existence of such $j$ is guaranteed by \Cref{lemma:permuation-invariance}; see \Cref{sec:step2} for detailed analysis.} such that 
    $I(\A:\B|R_E, \B_1,\ldots,\B_j)\leq \frac{\epsilon^2}{\ln 2}$, and $\E:=(R_E,\B_1,\ldots,\B_j)$. Apply the reconstruction channel
  $\mathcal{T}:\E\!\to\!\E\otimes\A'$ from
  \Cref{thm:quantum_mutual_info_operational}
  to generate a fake copy $\A'$ of $\A$, then measure $\A'$ to obtain a fake secret key $\sk'$.
  \end{enumerate}

  \item \textbf{Step 3: Reprogram the oracle and run $\mathcal{A}_1$.}
  
  Define polynomials $g(H):=\Pr[\mathcal{A}_0^H\!\rightarrow(\sk',m_0)]$ and
  $f(H):=g(H^{R_E})$.
  \begin{enumerate}
    \item If $g$ is the zero polynomial, abort.
    \item Otherwise, apply \Cref{lem:reprogram} to $f$
          with $m=d^2/\epsilon^2$ to obtain a partial assignment
          $\mu:[N]\!\to\!\{0,1,\star\}$ of size at most $d^4/\epsilon^2$. 
    \item 
    Reprogram the oracle as
        \(
        \tilde{H}(i):=H^{\mu}(i)=\begin{cases} 
        \mu(i) & \text{if } i\in\supp(\mu)\\
        H(i) & \text{otherwise}
        \end{cases}.
        \)
     \item Run $k_E\gets\mathcal{A}_1^{\tilde H}(\sk',m_1)$, and output $k_E$.
  \end{enumerate}
\end{enumerate}
\end{mdframed}
 \caption{Summary of Eve's attack algorithm.}
    \label{fig:algo}
\end{figure}

\subsection{\texorpdfstring{Step 1: Identify $\calB$'s heavy queries.}{Step 1: Identify B's heavy queries.}}
The first step is to identify Bob's heavy queries, i.e., inputs with large query
weight. 
These heavy queries will be kept unchanged 
when reprogramming the oracle in later steps, in order to ensure that the reprogrammed oracle
will be indistinguishable from the real oracle from Bob's perspective.

Specifically, in this step, Eve computes a query record
$R_E:=\{(i_E,H(i_E))\}$ by repeating the following process 
$\frac{d^6}{\epsilon^4}\log \frac{d^6}{\epsilon^5}$
times:
\begin{enumerate}
    \item Randomly choose $t\leftarrow[d]$, simulate
    $\calB^H(m_0)$ to its $t$-th query to the oracle, and measure the input
    register, obtain outcome $i\in[2^n]$;
    \item Classically query $i$ to the oracle and add $(i, H(i))$ to $R_E$.
\end{enumerate}

We claim that, with high probability, Eve can identify all of Bob's heavy queries. Formally,
\begin{lemma} \label{lem:wb}
Let $q_i$ be the query weight of input $i$ when running $\calB(m_0)$ on $H$, and 
$W_B:=\{i: q_i\geq \epsilon^4/d^5\}$. Then $\Pr[W_B\not\subseteq R_E]\leq \epsilon$.
\end{lemma}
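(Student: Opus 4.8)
The plan is to establish that each iteration of Eve's sampling process draws an input $i$ with probability exactly $q_i/d$, and then to combine a per-element miss bound with a union bound over the necessarily small set $W_B$. The parameter $N:=\frac{d^6}{\epsilon^4}\log\frac{d^6}{\epsilon^5}$ has been chosen precisely so that these two opposing requirements balance.

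First I would analyze a single iteration. Since $t$ is drawn uniformly from $[d]$ and the state right before $\calB$'s $t$-th query is $\ket{\psi_t}=\sum_{i,w}\alpha_{i,w,t}\ket{i,w}$, measuring the input register yields outcome $i$ with probability $\sum_w|\alpha_{i,w,t}|^2$. Averaging over the uniform choice of $t$ gives a single-iteration sampling probability of $\frac{1}{d}\sum_{t=1}^d\sum_w|\alpha_{i,w,t}|^2=q_i/d$, directly matching \Cref{def:query_weight}. Moreover, the iterations are independent and identically distributed, as each one is a fresh simulation of $\calB^H(m_0)$ on the same oracle.

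Next I would bound the probability that a fixed heavy query $i\in W_B$ is missed. For such $i$ we have $q_i\geq\epsilon^4/d^5$, hence per-iteration sampling probability at least $\epsilon^4/d^6$, so the probability $i$ is missed in all $N$ iterations is $(1-q_i/d)^N\leq e^{-Nq_i/d}\leq e^{-N\epsilon^4/d^6}=\epsilon^5/d^6$, using $1-x\leq e^{-x}$ and the chosen value of $N$. To finish, I would bound $|W_B|$: because each $\ket{\psi_t}$ is a unit vector, $\sum_i q_i=\sum_{t=1}^d\sum_{i,w}|\alpha_{i,w,t}|^2=d$, so at most $d/(\epsilon^4/d^5)=d^6/\epsilon^4$ inputs can be heavy. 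A union bound over $W_B$ then yields $\Pr[W_B\not\subseteq R_E]\leq|W_B|\cdot\tfrac{\epsilon^5}{d^6}\leq\tfrac{d^6}{\epsilon^4}\cdot\tfrac{\epsilon^5}{d^6}=\epsilon$.

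The argument is essentially routine, so there is no genuine obstacle; the only points demanding care are verifying that the single-iteration measurement distribution equals $q_i/d$ \emph{exactly} (so that it aligns with the query-weight definition rather than merely being comparable to it), and controlling $|W_B|$ via the normalization $\sum_i q_i=d$. Everything else is a standard coupon-collector-style concentration estimate.
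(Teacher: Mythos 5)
Your proposal is correct and follows essentially the same argument as the paper's proof: a per-repetition sampling probability of at least $\epsilon^4/d^6$ for each heavy query, the miss bound $(1-\epsilon^4/d^6)^{N}\leq \epsilon^5/d^6$ from the choice of $N$, the bound $|W_B|\leq d^6/\epsilon^4$ via $\sum_i q_i = d$, and a union bound. The only difference is that you spell out why the single-iteration measurement distribution equals $q_i/d$, which the paper leaves implicit.
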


\begin{proof}
    For each $i\in W_B$, it would be measured w.p.~at least $\epsilon^4/d^6$ at
    each repetition. Thus the probability that it is not measured is bounded by
    \begin{align*}
    \Pr[i\notin R_E]\leq 
    \left(1-\frac{\epsilon^4}{d^6}\right)^{\frac{d^6}{\epsilon^4}\log \frac{d^6}{\epsilon^5}}\leq \epsilon^5/d^6.
    \end{align*}
    Since $\sum_{i}q_i=d$, we have $|W_{B}|\leq (\sum_i q_i)/(\epsilon^4/d^5)= d^6/\epsilon^4$ by Markov's inequality. Thus by a union bound, we have $\Pr[W_B\not\subseteq R_E]\leq \sum_{i\in W_B}\Pr[i\notin R_E]\leq \epsilon$.
\end{proof}

\subsection{Step 2: Sample a fake secret key} \label{sec:step2}
The next step is to obtain a fake secret key that is indistinguishable from the
real secret key from Bob's perspective. We take an information-theoretic
approach: first reduce the mutual information between Alice and Bob's registers
conditioned on Eve's registers to a small value, and then apply the
reconstruction channel in \Cref{thm:quantum_mutual_info_operational} to sample a
fake secret key.

Specifically, consider the time 
right before
$\calB$ performs the final measurement. Let $\A$ denote the registers of
$\calA_0$ and $\B$ denote the registers of $\calB$. Eve repetitively runs
$\calB^H(m_0)$ and stops right before the final measurement for $4dn/\epsilon^2$
times, which yields $4dn/\epsilon^2$ copies
$\B_1,\B_2,\ldots,\B_{4dn/\epsilon^2}$ of $\B$. Observe that $\B, \B_1, \B_2,
\ldots$, $\B_{4dn/\epsilon^2}$ are permutation invariant. By \Cref{lem:entropyB} and
\Cref{lemma:permuation-invariance}, there exists a 
$0\leq j \leq 4dn/\epsilon^2$
such that 
\[
    I(\A:\B|R_E,\B_1,\B_2,\cdots,\B_j)\leq \frac{S(\A)}{4dn/\epsilon^2} \leq \frac{2d(n+1)}{4dn/\epsilon^2} \leq  \frac{\epsilon^2}{\ln 2}.
\]
Note $j$ is determined by the mixed state of $\A\B$ and Eve, averaged over the distribution of all possible oracles, as $H$ is already traced out in \Cref{lemma:permuation-invariance}. Thus $j$ can be computed by computationally unbounded Eve without making oracle queries to $H$\footnote{
$j$ can be explicitly computed by the following (inefficient) procedure:
(i) compute the density matrix $\rho$ of $\A\B\E$, averaged over the uniform distribution of oracles;
(ii) For each $j$,
compute 
$\text{CMI}_j:=I(\A:\B|R_E,\B_1,\B_2,\cdots,\B_j)_\rho$, which is a function of $\rho$;
(iii) output a $j$ such that $\text{CMI}_j
\leq 
\epsilon^2/\ln 2$, of which existence is guaranteed by \Cref{lemma:permuation-invariance}.
One can also guess a uniformly random $j$ if we do not require finding the key with probability close to $1$.}.

Eve only keeps the $j$ copies of $\B$, so that $\E:= (R_E, \B_1, \B_2, \ldots, \B_j)$.
By \Cref{thm:quantum_mutual_info_operational}, there exists a quantum channel $\mathcal{T}:\E\to\E\otimes \A'$ which generates a fake copy $\A'$ of
$\A$ such that 
\[TD(\rho_{ABE},\rho_{A'BE})\leq \sqrt{\ln 2 \cdot I(\A:\B|\E)}< \epsilon,\]
where $\rho_{ABE}$ is the state of system $\A\B\E$ and $\rho_{A'BE}$ is state of
system $\A'\B\E$. Similar to $j$, the channel $\mathcal{T}$ only depends on the mixed state of $\A\E\B$, can thus be implemented by computationally unbounded Eve without making queries to $H$. Since $\A'$ contains a register $\sk'$ storing the secret key,
Eve applies $\mathcal{T}$ on $\E$ to generate $\A'$, and uses $\sk'$ as the fake secret key. 


\begin{lemma} \label{lem:view}
Let $\view_{ABE}:=(\sk, m_0, k_B, m_1, R_E)$, $\view_{A'BE}:=(\sk', m_0, k_B,
m_1$, $R_E)$, $D_{ABE}$ denote the distribution of $\view_{ABE}$, and $D_{A'BE}$
denote the distribution of $\view_{A'BE}$. We have 
\[
    \Pr_{\view_{A'BE}\leftarrow D_{A'BE}}[\view_{A'BE}\notin\supp(D_{ABE})]\leq 2\epsilon.
\]
\end{lemma}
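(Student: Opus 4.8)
The plan is to recognize that both $\view_{ABE}$ and $\view_{A'BE}$ are produced by applying one and the \emph{same} measurement channel to the two global states $\rho_{ABE}$ and $\rho_{A'BE}$. Then the total variation distance between $D_{ABE}$ and $D_{A'BE}$ is controlled by the trace distance $TD(\rho_{ABE},\rho_{A'BE})$ that was already bounded at the end of Step 2, and the claim follows from \Cref{lem:support}.

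Concretely, I would first pin down the register bookkeeping. The channel $\mathcal{T}:\E\to\E\otimes\A'$ acts only on $\E$, leaving Bob's register $\B$ untouched and preserving $\E$. Since Bob runs $\calB^H(m_0)$, the components $m_0$, $k_B$, and $m_1$ all reside in $\B$, while $R_E$ resides in $\E$; hence these four components are read off from registers that are \emph{identical} in $\rho_{ABE}$ and $\rho_{A'BE}$. The only component supplied by the reconstruction is the fake secret key $\sk'$, and it occupies the same register slot as the real $\sk$ because $\A'$ is a copy of $\A$. Letting $\mathcal{M}$ denote the computational-basis measurement of the secret-key slot together with $m_0,k_B,m_1,R_E$, we therefore have $D_{ABE}=\mathcal{M}(\rho_{ABE})$ and $D_{A'BE}=\mathcal{M}(\rho_{A'BE})$ as classical distributions over views.

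Because $\mathcal{M}$ is a CPTP map and the trace distance is non-increasing under CPTP maps --- immediate from the variational form $TD(\rho,\sigma)=\sup_{0\le\Lambda\le I}\Tr[\Lambda(\rho-\sigma)]$ recorded in the definition of trace distance --- we obtain
\[
    TV(D_{A'BE},D_{ABE})=TD\big(\mathcal{M}(\rho_{A'BE}),\mathcal{M}(\rho_{ABE})\big)\le TD(\rho_{A'BE},\rho_{ABE})<\epsilon .
\]
Applying \Cref{lem:support} with $D_X=D_{A'BE}$ and $D_Y=D_{ABE}$ then gives $\Pr_{\view_{A'BE}\leftarrow D_{A'BE}}[\view_{A'BE}\notin\supp(D_{ABE})]\le 2\epsilon$, as desired. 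I expect the only genuinely delicate point to be the register identification in the second paragraph: one must justify that the two views are outputs of a single channel, i.e.\ that $\A'$ truly sits in the $\A$-slot so that ``measure the secret-key register'' denotes the same operation on both states, and that $m_0,m_1,k_B,R_E$ are measured from the preserved registers rather than from the reconstructed Alice. Once this identification is fixed, the remainder is a one-line invocation of data processing and \Cref{lem:support}.
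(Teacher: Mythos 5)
Your proposal is correct and follows essentially the same route as the paper's own proof: both views are computational-basis measurements of the corresponding registers of $\rho_{ABE}$ and $\rho_{A'BE}$, the data-processing inequality (what the paper calls the operational meaning of trace distance) gives $TV(D_{ABE},D_{A'BE})\leq TD(\rho_{ABE},\rho_{A'BE})\leq\epsilon$, and \Cref{lem:support} finishes. The extra register bookkeeping you flag as the delicate point is implicit in the paper's phrase ``corresponding registers,'' so your write-up is just a more explicit version of the same argument.
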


\begin{proof}
    $\view_{ABE}$ and $\view_{A'BE}$ are obtained from performing measurement in
    computational basis on the corresponding registers of state $\rho_{ABE}$ and
    $\rho_{A'BE}$ respectively. Since $TD(\rho_{ABE},\rho_{A'BE})\leq \epsilon$,
    we have that $TV(D_{ABE},D_{A'BE})\leq \epsilon$ by the operational meaning
    of trace distance. By~\Cref{lem:support}, we have\\
    $\Pr_{\view_{A'BE}\leftarrow D_{A'BE}}[\view_{A'BE}\notin\supp(D_{ABE})]\leq
    2\epsilon$.
\end{proof}

The above lemma shows that, with high probability, the tuple $(\sk', m_0, k_B, m_1$, $R_E)$ corresponds to a valid execution and is therefore compatible with some oracle $H'$. Here, we say that a tuple $(\sk', m_0, k_B, m_1, R_E)$ and an oracle $H'$ are compatible if (i) running the QKA protocol on $H'$ generates the view $(\sk', m_0, k_B, m_1)$ with non-zero probability; and (ii) the list of input-output pairs $R_E$ is consistent with $H'$. If Eve had access to such an oracle $H'$, it could just compute $k_E \leftarrow \dec^{H'}(\sk', m_1)$ to break the QKA protocol, since the perfect completeness ensures that $k_E = k_B$.

\subsection{\texorpdfstring{Step 3: Reprogram the oracle and run $\calA_1$}{Step 3: Reprogram the oracle and run A1}}
In the final step, Eve first obtains a reprogrammed oracle $\tilde{H}$ by modifying $O(d^4/\epsilon^2)$ entries of the original oracle $H$ (the details of the reprogramming will be specified later), and then computes $k_E$ by running the decryption algorithm $\mathcal{A}_1$ on $\tilde{H}$ using $\sk'$. In some cases, $\tilde{H}$ is compatible with the tuple $(\sk', m_0, k_B, m_1, R_E)$, in which case we have $k_E = k_B$. However, this compatibility does not always hold\footnote{Unlike \cite{own:ITCS}, our reprogramming step does not always guarantee under $\tilde{H}$, the algorithm outputs $\sk'$ with non-zero probability.}. Nevertheless, we can argue that $(\sk', m_0, k_B, m_1, R_E)$ is compatible with another (possibly unknown) oracle $H'$ that is very close to $\tilde{H}$ and agrees with it on the heavy queries made by $\mathcal{A}_1$. In this case, by \Cref{lem:BBBV}, we can still conclude that $k_E = k_B$ with high probability.

Here are the details of reprogramming. Let $N:=2^n$ and fix $\sk'$ and $m_0$. Let $g(H)$ denote the probability $\Pr[\calA_0^H\rightarrow (\sk',m_0)]$ where
$H\in\{0,1\}^N$ is treated as the truth table of oracle $H$.
Define
$f(H):=g(H^{R_E})$ where we abuse $R_E$ as a partial assignment that assigns the
$i$-th bit of $H$ as $y$ for all $(i, y)\in R_E$. Since $\calA_0$ makes at
most $d$ queries, we have $\deg(f)\leq 2d$ by \Cref{lem:poly_method}. 

If $g$ is a zero polynomial, then Eve aborts\footnote{As we will see, this will never happen unless $\view_{A'BE}\notin\supp(D_{ABE})$.}. Otherwise, Eve applies
\Cref{lem:reprogram} on polynomial $f$ by setting $m=d^2/\epsilon^2$ and obtains
a partial assignment $\mu:[N]\to\{0,1,\star\}$ of size at most $
m\cdot (2d)^2=d^4/\epsilon^2$. We can assume $\supp(\mu) \cap \supp(R_E)=\emptyset$ 
because changing $x_i$ for $i\in\supp(R_E)$ has no effect on the value of
$f(x)=g(x^{R_E})$. Then Eve reprograms the oracle as
\[
\tilde{H}(i):=H^\mu(i)=\begin{cases} 
\mu(i) & \text{if } i\in\supp(\mu)\\
H(i) & \text{otherwise}
\end{cases}.
\]

We have the following lemma. Intuitively, by Lemma~\ref{lem:BBBV}, $\tilde{H}$ is likely to be compatible with Bob’s view $(m_1, k_B)$, as it differs from $H$ on only a small fraction of Bob's query weight.
\begin{lemma}\label{thm:key_compatible}
    For the reprogrammed oracle $\Tilde{H}$ defined above and any quantum
    algorithm $\Bs$ making $d$ queries to the oracle, 
    \begin{align*}
    \Pr_{(k_{B},m_1)\leftarrow \Bs^{H}(m_0)}\left[(k_{B},m_1)\in\supp\left(\Bs^{\tilde{H}}(m_0)\right)\middle\vert \view_{A'BE}\in\supp(D_{ABE}) \right]\geq 1-O(\epsilon),
    \end{align*}
    where we slightly abuse the notation $\Bs^{H}(m_0)$ for the output distribution of the algorithm $\Bs$.
\end{lemma}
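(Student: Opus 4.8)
The plan is to reduce the statement to a single application of the BBBV perturbation bound (\Cref{lem:BBBV}), after first arguing that the reprogramming $\mu$ touches only inputs on which $\Bs$ places tiny query weight. I would begin by unpacking the conditioning event $\view_{A'BE}\in\supp(D_{ABE})$: if $(\sk',m_0,k_B,m_1,R_E)$ lies in the support of the honest view distribution, then there is an oracle $H'$ consistent with $R_E$ on which $\calA_0$ outputs $(\sk',m_0)$ with nonzero probability. Writing $x'$ for the truth table of $H'$, consistency with $R_E$ gives $x'^{R_E}=x'$, so $f(x')=g(x'^{R_E})=g(x')>0$ and hence $f\not\equiv 0$. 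This guarantees that $g$ is not the zero polynomial (so Eve does not abort) and, crucially, that \Cref{lem:reprogram} applies to $f$, producing the partial assignment $\mu$ with $|\mu|\le m\cdot(2d)^2=O(d^4/\epsilon^2)$ (using $\deg(f)\le 2d$ from \Cref{lem:poly_method}) and, by construction, $\supp(\mu)\cap\supp(R_E)=\emptyset$.

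Next I would invoke \Cref{lem:wb} to pass to the good event $W_B\subseteq R_E$, which fails with probability at most $\epsilon$. On this event, combining $\supp(\mu)\cap\supp(R_E)=\emptyset$ with $W_B\subseteq R_E$ yields $\supp(\mu)\cap W_B=\emptyset$; that is, every input $i$ where $\tilde H$ and $H$ disagree satisfies $q_i<\epsilon^4/d^5$. Summing over the at most $|\mu|=O(d^4/\epsilon^2)$ reprogrammed inputs gives
\[
\sum_{i\colon \tilde H(i)\neq H(i)} q_i \;=\; \sum_{i\in\supp(\mu)} q_i \;\le\; |\mu|\cdot\frac{\epsilon^4}{d^5} \;=\; O\!\left(\frac{\epsilon^2}{d}\right).
\]
This is the quantitative heart of the argument, and it is where the parameter choices ($m=d^2/\epsilon^2$, the weight threshold $\epsilon^4/d^5$, and the degree bound $2d$) must be balanced so that the product collapses to $O(\epsilon^2/d)$.

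I would then apply \Cref{lem:BBBV} to $\Bs$ run on $H$ versus $\tilde H$: letting $\ket{\psi_d},\ket{\phi_d}$ be the respective final states, the weight bound gives $\norm{\ket{\psi_d}-\ket{\phi_d}}\le 2\sqrt{d}\cdot\sqrt{O(\epsilon^2/d)}=O(\epsilon)$. This Euclidean closeness transfers to trace distance via the elementary inequality $\sqrt{1-|\IP{\psi_d|\phi_d}|^2}\le\norm{\ket{\psi_d}-\ket{\phi_d}}$, valid for any two unit vectors, so no phase alignment is needed. Measuring in the computational basis, the output distributions $\Bs^{H}(m_0)$ and $\Bs^{\tilde H}(m_0)$ are thus within total variation $O(\epsilon)$, and \Cref{lem:support} converts this into $\Pr[(k_B,m_1)\notin\supp(\Bs^{\tilde H}(m_0))]\le O(\epsilon)$. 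Adding back the at-most-$\epsilon$ probability that $W_B\not\subseteq R_E$ yields the claimed $1-O(\epsilon)$.

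The main obstacle, and the conceptual crux, is the decoupling in the second paragraph: $\mu$ is manufactured entirely from Alice's key-generation polynomial $f$ and the record $R_E$, with no reference to $\Bs$'s internal query structure, yet we need it to avoid $\Bs$'s heavy queries. The only bridge is that $\mu$ is kept disjoint from $\supp(R_E)$ while Step~1 has already absorbed $W_B$ into $R_E$; verifying that this disjointness is genuinely maintainable — it is, because flipping bits inside $\supp(R_E)$ leaves $f(x)=g(x^{R_E})$ unchanged — and that the resulting weight budget is small enough for BBBV is the delicate part. Everything else is a bookkeeping of the three small error contributions.
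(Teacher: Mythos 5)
Your proposal is correct and follows essentially the same route as the paper's proof: condition on $W_B\subseteq R_E$ via \Cref{lem:wb}, use the disjointness of $\supp(\mu)$ from $\supp(R_E)$ (hence from $W_B$) to bound $\sum_{i\colon\tilde H(i)\neq H(i)}q_i\leq |\mu|\cdot\epsilon^4/d^5=O(\epsilon^2/d)$, apply \Cref{lem:BBBV}, and convert to a support statement with \Cref{lem:support}. The only cosmetic difference is the norm-to-distribution step, where you use the fidelity inequality for pure states while the paper cites the factor-$4$ bound from BBBV Theorem~3.1; both are valid and yield the same $O(\epsilon)$ conclusion.
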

\begin{proof} 
    Combining~\Cref{lem:view}
    and~\Cref{lem:wb} we have that
    \begin{align*}
    &\Pr[\view_{A'BE}\in\supp(D_{ABE})\land W_B\subseteq R_{E}]\\
    \geq &    1-\Pr[\view_{A'BE}\notin \supp(D_{ABE})]-\Pr[W_B\not\subseteq R_{E}]\\
    \geq &    1-O(\epsilon).
    \end{align*}
    
    Now consider when $\view_{A'BE}\in \supp(D_{ABE})$ and $W_B\subseteq R_{E}$.
    Since $\view_{A'BE}\in \supp(D_{ABE})$, $(\sk', m_0, R_E)$ is valid under
    some oracle, which implies $f(x)$ is not identically zero. Then Eve will not
    abort and $\tilde{H}$ is well-defined. We have
        \begin{align*}
            TV\left(\Bs^{\tilde{H}}(m_0),\Bs^{H}(m_0)\right) & \leq 4||\ket{\psi_d}-\ket{\phi_d}|| 
            \leq 8\sqrt{d}\sqrt{\sum_{i\colon\Tilde{H}(i)\neq H(i)}q_i} \\
            &\leq 8\sqrt{d}\sqrt{\frac{\epsilon^4}{d^5}|\{i\colon\Tilde{H}(i)\neq H(i)\}|}\\
            &\leq 8\sqrt{d}\sqrt{\frac{\epsilon^4}{d^5}\cdot \frac{d^4}{\epsilon^2}}=O\left(\epsilon\right),
        \end{align*}
        where $\ket{\psi_d}$ and $\ket{\psi_d}$ are the states of
        $\Bs^{\tilde{H}}(m_0)$ and $\Bs^{H}(m_0)$ respectively, and $q_i$ is the
        query weight of input $i$ when running $\Bs$ on $H$. The first
        inequality comes from~\cite[Theorem~3.1]{BBBV97}, the second inequality
        is~\Cref{lem:BBBV}, the third inequality is because $H$ and $\tilde{H}$
        only differ on inputs that are outside $W_B$, and the last inequality is
        because $|\{i\colon\Tilde{H}(i)\neq H(i)\}|\leq |\mu|\leq
        d^4/\epsilon^2$.
        
        By~\Cref{lem:support}, we have that
        \begin{align*}
            \Pr_{(k_{B},m_1)\leftarrow \Bs^{H}(m_0)}\left[(k_{B},m_1)\in\supp\left(\Bs^{\tilde{H}}(m_0)\right)\middle\vert \view_{A'BE}\in\supp(D_{ABE})\land W_B\subseteq R_{E}\right]
            \geq 1-O\left(\epsilon\right).
        \end{align*}
        The final statement follows from a conditional probability formula and Lemma \ref{lem:wb}.
\end{proof}

\subsection{Putting things together}
Now, we are ready to prove Theorem \ref{thm:qpke}.
\begin{proof}[Proof of \Cref{thm:qpke}]



We will prove that by the 3-step attack algorithm described above, Eve will output $k_E$ such
that $\Pr[k_E=k_B]=1-O(\epsilon)$. First of all, by \Cref{lem:view}, we have
that
\begin{equation} \label{eq:first}
    \Pr[\view_{A'BE}\in\supp(D_{ABE})]\geq 1-O(\epsilon).
\end{equation}

Now consider the case when $\view_{A'BE}\in\supp(D_{ABE})$, which implies that
$(\sk', m_0, k_B, m_1, R_E)$ will be a valid execution under some oracle. In this case, 
the function $f$ is not identically zero, and Eve will obtain a partial assignment $\mu$ without aborting. By \Cref{lem:reprogram}, one of the following cases must hold:
\begin{enumerate}
    \item[(a)] $f(H^\mu)\not=0$. 
    As $\tilde{H}=H^\mu$, we have
    $\Pr[(\sk',m_0)\gets\calA_0^{\tilde{H}}]=f(H^\mu)$ is non-zero, which means
    $(\sk',m_0)\in \supp(\calA_0^{\tilde{H}})$.
    \item[(b)] There exist $d^2/\epsilon^2$ pairwise disjoint partial assignments
    $\mu_1,\ldots, \mu_{d^2/\epsilon^2}$ of size $\leq 2d$ such that
    $f(H^{\mu_\ell\cdot\mu})\neq 0$ for all $\ell\in[d^2/\epsilon^2]$. We can assume that for
    all $\ell$, $\supp(\mu_\ell)\cap \supp(R_E)=\emptyset$ as changing $x_i$ for
    $i\in\supp(R_E)$ has no effect on the value of $f(x)=g(x^{R_E})$. Then
    observe that $H^{\mu_\ell\cdot \mu}$ is the truth table of the following oracle
    \[
        \tilde{H}_\ell(i):=H^{\mu_\ell\cdot \mu}(i)=\begin{cases} 
        \mu_\ell(i) & \text{if } i\in\supp(\mu_\ell) \setminus \supp(\mu) \\
        \tilde{H}(i) & \text{otherwise}
        \end{cases}.
    \]
    Thus $\Pr[(\sk',m_0)\gets\calA_0^{\tilde{H}_\ell}]=f(H^{\mu_\ell\cdot \mu})$ is non-zero, which means
    $(\sk',m_0)\in \supp(\calA_0^{\tilde{H}_\ell})$.
\end{enumerate}
Next, we argue that in both cases, Eve will output $k_E=k_B$ with probability
$1-O(\epsilon)$.

\emph{Case (a)} We argue that $\view_{A'BE}=(\sk', m_0, k_B, m_1, R_E)$ is compatible with $\tilde{H}$
with high probability. Obviously, $\tilde{H}$ is consistent with $R_E$. Moreover,
\begin{enumerate}
    \item From perspective of $A'$, $(\sk',m_0)\in \supp(\calA_0^{\tilde{H}})$
    implies that $(\sk',m_0)$ is compatible with $\tilde{H}$. 
    \item From perspective of Bob, $\Bs^{H}(m_0)$ represents a distribution over
    key-message pairs ${(k_B, m_1)}$. Now suppose we run the algorithm
    $\Bs^{\tilde{H}}(m_0)$ instead. According to~\Cref{thm:key_compatible}, with
    probability at least $1-O(\epsilon)$, a pair 
    $(k_B, m_1)$ produced by
    $\Bs^{H}(m_0)$ will also lie within the support of $\Bs^{\tilde{H}}(m_0)$.
\end{enumerate}
So, in particular, $\view_{A'B}$ is a valid execution under $\tilde{H}$ with probability $1-O(\epsilon)$. Conditioned on that $\view_{A'B}$
is valid under $\tilde{H}$, the perfect completeness implies that $k_E=\calA_1^{\tilde{H}}(\sk',m_1)$ must equal $k_B$. Thus we have $\Pr[k_E=k_B]=1-O(\epsilon)$ in Case (a).

\emph{Case (b)} Let $w_i$ be the query weight of input $i$ when running
$\calA_1(\sk',m_1)$ on $\tilde{H}$. Note that $\sum_i w_i\leq d$ because
$\calA_1$ makes at most $d$ queries to $H$. Since
$\mu_1,\ldots,\mu_{d^2/\epsilon^2}$ are disjoint partial assignments, there must
exist a $\ell^*\in[d^2/\epsilon^2]$ such that 
\begin{equation} \label{eq:wa}
\sum_{i\in\supp(\mu_{\ell^*})} w_i\leq \frac{d}{d^2/\epsilon^2}=\frac{\epsilon^2}{d}.
\end{equation}
For simplicity, let $H'$ denote $\tilde{H}_{\ell^*}$. First, imagine that Eve
runs $\mathcal{A}_1(\sk', m_1)$ on $H'$ and obtains a key $k_E'$. Observe that
$H'$ and $H$ differ by at most $|\mu_{l^*}\cdot \mu|\leq
2d+O(d^4/\epsilon^2)=O(d^4/\epsilon^2)$ positions and $\supp(\mu_{l^*}\cdot
\mu)\cap\supp(R_E)=\emptyset$. By the same argument as in \Cref{thm:key_compatible}, 
we have
\begin{equation} \label{eq:kb}
    \Pr_{(k_B,m_1)\gets \calB^{H}(m_0)}\left[(k_{B},m_1)\in\supp\left(\Bs^{H'}(m_0)\right)\middle\vert\view_{A'BE}\in\supp(D_{ABE}) \right]=1-O(\epsilon).
\end{equation}
Then, by the same argument as in Case (a), we have 
\begin{equation} \label{eq:ka}
    \Pr_{k_E'\gets \calA_1^{H'}(\sk',m_1)}\left[k_{E}'=k_B\middle\vert (k_{B},m_1)\in\supp\left(\Bs^{H'}(m_0)\right),\view_{A'BE}\in\supp(D_{ABE}) \right]=1.
\end{equation}
Combining Eqs.~\eqref{eq:first}, \eqref{eq:kb}, and \eqref{eq:ka}, we have
$\Pr_{k_E'\gets \calA_1^{H'}(\sk',m_1)}\left[k_{E}'=k_B\right]=1-O(\epsilon)$,
which means Eve will find the key with probability $1-O(\epsilon)$ if it runs
$\calA_1(\sk', m_1)$ on oracle $H'$. However, Eve knows only the existence of
$H'$, but does not know how to access it. The next step is to argue that by
running $\calA_1(\sk', m_1)$ on oracle $\tilde{H}$ instead, as done in the attack
algorithm, Eve can also find the key with high probability. For any fixed real
oracle $H$, we have that 
\begin{align*}
TV\left(\calA_1^{H'}(\sk',m_1), \calA_1^{\tilde{H}}(\sk',m_1)\right)
&\leq 4 ||\ket{\psi_d}-\ket{\phi_d}|| \\
&\leq 8\sqrt{d}\sqrt{\sum_{i:H'(i)\not=\tilde{H}(i)}w_i} \\
&\leq 8\sqrt{d}\sqrt{\sum_{i\in\mu_{\ell^*}}w_i} 
\leq 8\sqrt{d}\sqrt{\frac{\epsilon^2}{d}}=O(\epsilon)
\end{align*}
where $\ket{\psi_d}$ and $\ket{\psi_d}$ are the states of
$\calA_1^{H'}(\sk',m_1)$ and $\calA_1^{\tilde{H}}(\sk',m_1)$ right before the
final measurement, respectively. The first inequality comes
from~\cite[Theorem~3.1]{BBBV97}, the second inequality is~\Cref{lem:BBBV}, the
third inequality is by definition of $H'$, and the last inequality is
Eq.~\eqref{eq:wa}.

Since the distribution of $k_E\gets \calA_1^{\tilde{H}}(\sk',m_1)$ and
$k_E'\gets\calA_1^{H'}(\sk',m_1)$ are $O(\epsilon)$-close for any $H$, replacing
$k_E'$ with $k_E$ in Eq.~\eqref{eq:ka} will only cause $O(\epsilon)$ loss of the
probability, i.e.,
\begin{equation} \label{eq:kaeps}
    \Pr_{k_E\gets \calA_1^{\tilde{H}}(\sk',m_1)}\left[k_E=k_B\middle\vert (k_{B},m_1)\in\supp\left(\Bs^{H'}(m_0)\right),\view_{A'BE}\in\supp(D_{ABE}) \right]=1-O(\epsilon).
\end{equation}
Combining Eqs.~\eqref{eq:first}, \eqref{eq:kb}, and \eqref{eq:kaeps}, we have
$\Pr_{k_E\gets \calA_1^{\tilde{H}}(\sk',m_1)}[k_E=k_B]=1-O(\epsilon)$. 

Finally, we analyze the query complexity of Eve's attack algorithm. Step 1 requires at most
$(d+1)\cdot d^6\log
(d^6/\epsilon^5)/\epsilon^4=O(d^7\log(d/\epsilon)/\epsilon^4)$ queries. Step 2
requires at most $d\cdot 4dn/\epsilon^2=O\left(d^2n/\epsilon^2\right)$ queries. Step 3
requires $d$ queries. Thus
Eve makes $O\left( d^7 \log(d/\epsilon) / \epsilon^4+nd^2/\epsilon^2\right)$
queries in total.
\end{proof}

\subsection{Extending to quantum public key and ciphertext}

By further reviewing
our proof, we can extend the impossibility result to quantum $m_0$ and $m_1$. 
Specifically, as long
as the public key $\ket{m_0}$ is a pure state that is uniquely determined by
the secret key, and Eve can access polynomially many copies of $\ket{m_0}$,
the attack algorithm still works, with a few minor modifications detailed below.
\begin{itemize}
    \item Steps 1 and 2 of the attack require Eve to run
    $\mathcal{B}^H(\ket{m_0})$ for polynomial times. As Eve can obtain
    polynomially many copies of $\ket{m_0}$, these two steps are doable and the related analysis still holds. 
    \item In Step 3, since Eve does not have the full description of the quantum $m_0$, and thus is not capable of identifying the polynomial that represents the probability of outputting $(\sk', m_0)$.
    However, the key
    observation is that since $\ket{m_0}$ is uniquely determined by the secret
    key, we can instead define the polynomial $f$ as the probability of the oracle
    outputting $\sk'$ alone, i.e., 
    \[g(H):=\Pr[\calA_0^H\rightarrow \sk'],\quad f(H):=g(H^{R_E}).\] 
    Then Eve applies \Cref{lem:reprogram} on this 
    $f$, obtains a
    reprogrammed oracle $\tilde{H}$, and finally outputs
    \(k_E\gets \calA_1^{\tilde{H}}(\sk',\rho_{m_1}).\)
\end{itemize}
The proof is almost the same as before, and we only sketch the main ideas here. 
\begin{enumerate}
    \item From the perspective of $\A'$, the reprogramming the oracle using $f$
    will guarantee $\tilde{H}$ in Case (a) (and $H'$ in Case (b)) to produce
    $\sk'$ with non-zero probability. We argue that the fake secret key $\sk'$
    will produce the real public key $\ket{m_0}$ with high probability: Since
    $\B$ is not affected by channel $\Ts$, we uncompute $\Bs$ on state
    $\rho_{A'B}$. As the uncomputation does not increase trace distance, we can
    see that the state of $(\sk',\ket{m_0})$ is $\epsilon$-close to that of
    $(\sk, \ket{m_0})$. Thus with probability $1-\epsilon$, $\sk'$ will produce
    $\ket{m_0}$.
    \item From the perspective of Bob, the output state of $\Bs^H(\ket{m_0})$
    is a cq-state $\rho_B=\sum_{k_B} p_{k_B}\ket{k_B}\bra{k_B}\otimes
    \rho_{k_B}$ where $\rho_{k_B}$ is the state of $m_1$ conditioned on $k_B$.
    Similarly for $\Bs^{\tilde{H}}(\ket{m_0})$ in Case (a) (and
    $\Bs^{H'}(\ket{m_0})$ in Case (b)), we express the output state as $\sigma_B=\sum_{k_B}
    p_{k_B}'\ket{k_B}\bra{k_B}\otimes \sigma_{k_B}$. By using \Cref{lem:BBBV} as in
    the proof of \Cref{thm:qpke}, 
    we will get $TD(\rho_B, \sigma_B)\leq
    O(\epsilon)$. Let $\rho_{B}':=\sum_{k_B}p_{k_B}'\ket{k_B}\bra{k_B}\otimes \rho_{k_B}$, $D_B:=\{p_{k_B}\}$ and $D_{B'}:=\{p_{k_{B}'}\}$. Then we have
    \begin{align*}
        TD(\rho_B',\sigma_B)\leq& TD(\rho_B,\rho_B') + TD(\rho_B,\sigma_B) 
        = TV\left(D_B,D_{B'}\right) + TD(\rho_B,\sigma_B) \\
        \leq & 2TD(\rho_B,\sigma_B) \leq O(\epsilon),
    \end{align*}
    where the first inequality is the triangle inequality, the second inequality is
    because partial trace does not increase trace distance, and the last
    inequality is $TD(\rho_B, \sigma_B)\leq O(\epsilon)$. Since
    $TD(\rho_B',\sigma_B)=\mathbb{E}[TD(\rho_{k_B},\sigma_{k_B})]$, by Markov
    inequality, we have 
    \[
    \Pr_{k_B\gets D_{B}}[TD(\rho_{k_B}, \sigma_{k_B})\leq \sqrt{\epsilon}] \geq 1-O(\sqrt{\epsilon}).
    \]    
\end{enumerate}
By the above argument, with probability $1 - O(\sqrt{\epsilon})$, 
the tuple $(\sk', \ket{m_0}, k_B, \rho_{k_B}$, $R_E)$ is $O(\sqrt{\epsilon})$-close to a tuple $(\sk', \ket{m_0}, k_B, \sigma_{k_B}, R_E)$ which is compatible with oracle $\tilde{H}$ in Case (a) (and $H'$ is Case (b)).
In Case (a), by perfect completeness, it follows that $k_E = k_B$ with probability $1 - O(\sqrt{\epsilon})$. In Case (b), firstly for $k_E' \gets \As_1^{H'}(\sk', \sigma_{k_B})$, we have $k_E' = k_B$ with probability $1 - O(\sqrt{\epsilon})$. We then apply \Cref{lem:BBBV} to the algorithm $\As_1(\sk', \sigma_{k_B})$ under oracles $\tilde{H}$ and $H'$ to conclude that $k_E = k_B$ with probability $1 - O(\sqrt{\epsilon})$. We remark that although \Cref{lem:BBBV} is stated for pure-state inputs, it also applies to the mixed-state input $\sigma_{k_B}$, since we can always assume the input to be the purification of $\sigma_{k_B}$.


Recall the IND-CPA security notion from~\Cref{def:PKE_qpk}. If the public key is
a pure state, the adversary algorithm $\Es$ can obtain polynomial number of
copies of $\ket{\pk}$. Given any two plaintexts $m_0 \neq m_1$, we can create a
one-bit key agreement by designating the ciphertext as the second message
(choosing between $\rho_{\ct_0}$ and $\rho_{\ct_1}$). Therefore, by executing
our modified attack algorithm, we can break the IND-CPA security with an
advantage of $1-O(\sqrt{\epsilon})$.

\begin{theorem}[Restate of \Cref{thm:thm2}] \label{thm:qpkeqpk} 
For any perfect-complete QPKE with quantum
    public key in QROM, which makes $d$ queries to the random oracle $H:[2^{n}]\to \{0,1\}$ during each of $\skgen,\enc$ and $\dec$, and no queries during $\pkgen$, there
    exists an adversary Eve that can break the scheme with probability $ 1-O (\sqrt{\epsilon})$ by
    making $O\left( d^7 \log(d/\epsilon) / \epsilon^4+nd^2/\epsilon^2\right)$
    queries to $H$. 
\end{theorem}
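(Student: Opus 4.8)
The plan is to reuse the entire three-step attack developed for \Cref{thm:qpke}, now instantiated on the two-round key agreement whose first message is the quantum public key $\ket{m_0}$ and whose second message is the (possibly quantum) ciphertext $\rho_{m_1}$. Since the IND-CPA game hands Eve polynomially many copies of $\ket{m_0}$, I first observe that Steps 1 and 2 are untouched: both merely run $\Bs^H(\ket{m_0})$ a polynomial number of times, so harvesting Bob's heavy queries into $R_E$ (\Cref{lem:wb}) and extracting the permutation-invariant copies $\B_1,\dots,\B_j$ needed to drive \Cref{lemma:permuation-invariance,thm:quantum_mutual_info_operational} go through verbatim, yielding a fake secret key $\sk'$ with $TD(\rho_{ABE},\rho_{A'BE})\leq\epsilon$ together with the support statement of \Cref{lem:view}.

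The real work is in Step 3, where the classical proof requires the polynomial $\Pr[\calA_0^H\to(\sk',m_0)]$, which no longer makes sense once $m_0$ is quantum. The idea I would use is to lean on the hypothesis that $\pkgen$ makes no oracle queries: define $g(x):=\Pr[\calA_0^H\to\sk']$ on the secret key alone and $f(x):=g(x^{R_E})$, which is still of degree $\leq 2d$, and feed this $f$ to \Cref{lem:reprogram} to build the reprogrammed oracle $\tilde H$ (Case (a)) or the disjoint family yielding $H'=\tilde H_{\ell^*}$ (Case (b)). I then need the new claim that fixing $\sk'$ already fixes the public key: because $\ket{m_0}=\pkgen(\sk')$ is oracle-independent, uncomputing $\Bs$ on the reconstructed state $\rho_{A'B}$ and using that uncomputation cannot increase trace distance shows that $(\sk',\ket{m_0})$ is $\epsilon$-close to $(\sk,\ket{m_0})$, so the fake key regenerates the genuine public key with probability $1-\epsilon$.

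It remains to push the compatibility argument of \Cref{thm:key_compatible} through a quantum ciphertext. Writing Bob's real and reprogrammed outputs as cq-states $\rho_B=\sum_{k_B}p_{k_B}\ket{k_B}\bra{k_B}\otimes\rho_{k_B}$ and $\sigma_B=\sum_{k_B}p_{k_B}'\ket{k_B}\bra{k_B}\otimes\sigma_{k_B}$, the same \Cref{lem:BBBV} bound (with Bob's heavy queries protected by $R_E$) gives $TD(\rho_B,\sigma_B)=O(\epsilon)$; comparing against the hybrid $\rho_B'=\sum_{k_B}p_{k_B}'\ket{k_B}\bra{k_B}\otimes\rho_{k_B}$ via the triangle inequality and monotonicity of trace distance under partial trace bounds $\Exp_{k_B}[TD(\rho_{k_B},\sigma_{k_B})]=O(\epsilon)$. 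A Markov step then certifies $TD(\rho_{k_B},\sigma_{k_B})\leq\sqrt\epsilon$ for a $1-O(\sqrt\epsilon)$ fraction of $k_B$, so with that probability the tuple $(\sk',\ket{m_0},k_B,\rho_{k_B},R_E)$ is $O(\sqrt\epsilon)$-close to one compatible with $\tilde H$ (resp. $H'$). Perfect completeness then yields $k_E=k_B$ up to $O(\sqrt\epsilon)$ error, where in Case (b) I apply \Cref{lem:BBBV} once more to transfer from $H'$ to $\tilde H$ exactly as in \Cref{thm:qpke}, noting the lemma still applies to the mixed input $\sigma_{k_B}$ by feeding in its purification. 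The query count is identical to \Cref{thm:qpke} because $\pkgen$ costs no queries.

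I expect the main obstacle to be precisely this last step. Without the clean classical ``the view lies in the support'' statement, I must track the quantum ciphertext simultaneously through the reprogramming step (where $f$ now certifies only $\sk'$, not the full view) and the compatibility step, and the only clean way to bridge the gap is the averaging-plus-Markov detour above, which is exactly what degrades the guarantee from $1-O(\epsilon)$ to $1-O(\sqrt\epsilon)$. Everything else is a faithful transcription of the classical-key proof once the polynomial is redefined to depend on $\sk'$ alone and the oracle-independence of $\pkgen$ is used to recover $\ket{m_0}$ for free.
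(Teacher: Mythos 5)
Your proposal is correct and takes essentially the same route as the paper's own proof: redefining the polynomial via $g(x)=\Pr[\calA_0^H\to\sk']$ using the oracle-independence of $\pkgen$, the uncomputation argument showing $(\sk',\ket{m_0})$ is $\epsilon$-close to $(\sk,\ket{m_0})$, the cq-state hybrid $\rho_B'$ with the triangle inequality and partial-trace monotonicity, and the Markov step that degrades the guarantee to $1-O(\sqrt{\epsilon})$ all match the paper's argument step for step, including the purification remark needed to apply \Cref{lem:BBBV} to the mixed-state input $\sigma_{k_B}$ and the final transfer from $H'$ to $\tilde H$ in Case (b).
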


Theorem \ref{thm:thm3} is immediately implied by Theorem \ref{thm:qpkeqpk}.

\bibliography{ref}
\bibliographystyle{alpha}

\appendix

\section{\texorpdfstring{Missing Proofs in \Cref{sec:helper_lemmas1}}{Missing Proofs}} \label{sec:proofs}

For the convenience of readers, we provide the proofs of two lemmas from \cite{own} that are used in the main
text. 

\subsection{\texorpdfstring{Proof of \Cref{lem:entropyB}}{Proof of Entropy Upperbound}}
    We can realize the quantum query unitary $U_H$ via a quantum communication
    protocol involving two parties: the algorithm $\calA$ and Oracle. To execute $U_H$, the protocol proceeds
    as follows: 
    
    \begin{enumerate} 
        \item $\calA$ sends both its input register and output register,
    $n+1$ qubits in total, to Oracle;
        \item Oracle applies the unitary $U_H$ on these $n+1$ qubits and then returns
    them to $\calA$. 
    \end{enumerate} 
    By the subadditivity of entropy, each such quantum communication can increase the entropy of $\calA$'s whole register
    by at most $2(n+1)$. In addition, applying local unitary $U_i$ does not change the entropy. Since $\calA$'s register $\A$ initially contains a pure state (with zero entropy), it follows that $S(\A)_\rho\leq 2d(n+1)$ after
    $d$ such rounds. 

\subsection{\texorpdfstring{Proof of \Cref{lemma:permuation-invariance}}{Proof of Permutation Invariance}}
The following basic facts below will be used.

\begin{factthm}[\cite{wilde2011classical}] \label{fact:separable_basic}
    If $\rho_{\A\B}$ is a separable state, then $S(\A|\B)\geq 0$.
\end{factthm}

\begin{factthm}[Chain rule]\label{fact:chain_rule}
    $I(\B_1,\B_2,\cdots,\B_t:\A\mid \C)=\sum_{i=1}^t I(\B_i:\A\mid \C,\B_1,\cdots,\B_{i-1})$.
\end{factthm}

\begin{proof}[Proof of \Cref{lemma:permuation-invariance}]
    By the chain rule for conditional mutual information (Fact \ref{fact:chain_rule}), we have
    \begin{equation}\label{eq:non-interactive-eq1}
    \sum_{i=1}^t I(\B_i : \A \mid \C, \B_1, \ldots, \B_{i-1}) = I(\B_1, \ldots, \B_t : \A \mid \C).
    \end{equation}
    Moreover, we have
    \begin{equation}\label{eq:non-interactive-eq2}
    I(\B_1, \ldots, \B_t : \A \mid \C) = S(\A \mid \C) - S(\A \mid \C, \B_1, \ldots, \B_t) \leq S(\A \mid \C) \leq S(\A),
    \end{equation}
    where the inequalities follow from \Cref{fact:separable_basic} and the
    non-negativity of $I(\A:\C)=S(\A)-S(\A\mid \C)$. Combining
    \eqref{eq:non-interactive-eq1} and \eqref{eq:non-interactive-eq2}, it
    follows that there exists some $i\in [t]$ for which
    \[
    I(\B_i : \A \mid \C, \B_1, \ldots, \B_{i-1}) \leq \frac{S(\A)}{t}.
    \]
    Finally, by permutation invariance, we have
    \[
    I(\B_i : \A \mid \C, \B_1, \ldots, \B_{i-1}) = I(\B_t : \A \mid \C, \B_1, \ldots, \B_{i-1}).
    \]
    This completes the proof.
\end{proof}

\end{document}